\documentclass[pdflatex,sn-basic,Numbered,oneside]{sn-jnl}

\usepackage{amsmath,amssymb,mathtools}
\usepackage{booktabs}
\usepackage{array}
\usepackage{microtype}
\usepackage{tikz}
\usepackage{xcolor}
\usepackage[normalem]{ulem}
\usetikzlibrary{positioning,arrows.meta,calc}

\theoremstyle{thmstyleone}
\newtheorem{theorem}{Theorem}[section]
\newtheorem{lemma}[theorem]{Lemma}

\newtheorem{axiom}[theorem]{Axiom}
\newtheorem{grindrule}[theorem]{Rule}

\theoremstyle{thmstylethree}
\newtheorem{definition}[theorem]{Definition}

\newcommand{\snail}[1][1]{%
\begin{tikzpicture}[baseline=-0.55ex,scale=#1*0.105,line width=#1*0.5pt,
                    line cap=round,line join=round]
  \draw (-3.05,-1.15) .. controls (-3.95,-1.15) and (-4.15,-0.35) .. (-3.75,-0.05)
        .. controls (-3.45,0.18) and (-3.15,-0.15) .. (-2.95,-0.5)
        .. controls (-2.2,-1.35) and (2.0,-1.55) .. (2.9,-1.15) -- cycle;
  \draw (-3.72,-0.15) -- (-4.45,1.15);
  \draw (-3.15,-0.35) -- (-3.25,1.05);
  \fill (-4.45,1.15) circle (0.28);
  \fill (-3.25,1.05) circle (0.28);
  \draw plot[domain=0:900,samples=140,variable=\t,smooth]
    ({0.35+0.00245*\t*cos(\t)},{0.05+0.00245*\t*sin(\t)});
\end{tikzpicture}%
}

\newcolumntype{L}[1]{>{\raggedright\arraybackslash}p{#1}}
\newcommand{\finding}[1]{\emph{#1}\par\medskip}
\newcommand{\notclaimed}[1]{\par\smallskip\noindent\textit{Not claimed.} #1\par}

\ifdefined\showtrackedchanges
  \newcommand{\added}[1]{\textcolor{blue}{#1}}
  \newcommand{\deleted}[1]{\textcolor{red}{\sout{#1}}}
\else
  \newcommand{\added}[1]{#1}
  \newcommand{\deleted}[1]{}
\fi
\newcommand{\changegap}{\ifdefined\showtrackedchanges\space\fi}

\begin{document}

\title[AI Grinding for Fun and Cryptanalysis]{AI Grinding for Fun and
Cryptanalysis}

\author*[1,2]{\fnm{Lukasz} \sur{Olejnik}}\email{me@lukaszolejnik.com}
\author[3,4]{\fnm{Bartosz} \sur{Naskrecki}}\email{bartnas@amu.edu.pl}

\affil*[1]{\orgdiv{Department of War Studies},
\orgname{King's College London}}
\affil[2]{\orgname{Independent Researcher}}
\affil[3]{\orgdiv{Faculty of Mathematics and Computer Science},
\orgname{Adam Mickiewicz University}}
\affil[4]{\orgdiv{Centre for Credible AI},
\orgname{Warsaw University of Technology}}

\abstract{\deleted{AI may offer many ways of taking a construction apart, faster than any of them
can be checked. Exact arithmetic and a control with a predetermined outcome
decide which survive. Grinding of this kind is a young direction in
cryptanalysis, and what follows is its current yield.}

\deleted{AI systems can generate candidate attacks faster than they can be
assessed. We study an evidence-gated workflow that can operate autonomously
through hypothesis generation, exact testing, discriminating controls, and
recursive attacks on surviving limitations. Human review is delayed until a
candidate has produced a reproducible insight at the parameters under study.
The operator then verifies source fidelity, security consequence, scope,
provenance, and disclosure status.}

\deleted{The primary contribution is this separation between autonomous
discovery and human cryptanalytic certification. The attacks below serve as
case studies of the evidence gate in operation.}

\added{We present an autonomous cryptanalysis workflow in which agents
generate, test, and refine hypotheses before human review. The autonomous
stage returns reproducible candidate findings with exact witnesses, controls,
code, environments, and run records. A researcher then determines whether the
evidence establishes a break, defect, or coverage gap. The attacks below serve
as case studies of this method.}

Two failure modes recur, and naming them is more useful than listing the
targets, so they come first. In the first, a public algebraic map
\added{or an input representation} erases or exposes the relation a
construction must keep hidden. The maps are
unglamorous: multiplication by zero, the boundary coefficients of a polynomial
product, a quotient, a character, a Schur square, \added{or a variable-length
byte encoding without boundaries}. In the second, a
distribution is not the law it is claimed to be, whether a simulator, an error
law, or a parameter certification. Several targets fail on both axes at once,
which is what makes them decisive.

Every result carries an exact witness and at least one discriminating control,
and every stated boundary carries a proof. Three further targets yielded no
attack but a narrower guarantee than a generic reading suggests, and are
reported separately so that no coverage gap reads as a break.

Eight published constructions fail at parameters or claims their own authors
state. A Ring-LWR commitment opens to every message with probability one; two
middle-product encryption rows are decrypted from a single ciphertext; a
lattice e-voting protocol loses receipt-freeness, so a coerced voter can no
longer hide how they voted; a published
permutation recovery against an updatable encryption scheme lifts by linear
algebra to the old decryption key; an explicit normal basis collapses a
degree-63 instance into seven degree-nine ones; a signature hash outside the
lattice setting maps two printable messages of equal length to one digest; and
a rerandomisable scheme's accept bit is a threshold oracle on its own
decryption noise. Kept separate on purpose, a group-ring decision claim and a
multivariate MinRank hardening fail as assumption and accounting defects rather
than as broken constructions. Every failure sits one level above the assumption
it rests on. Who knows what else awaits out there?}

\keywords{cryptanalysis, lattice-based cryptography, multivariate
cryptography, machine-assisted research, reproducible evidence}

\maketitle

\raggedbottom

\section{Grinding: a direction for
machine-assisted cryptanalysis}
\label{sec:grind}

AI may offer many ways of taking a construction apart. A model reads a paper
and returns candidate lines of attack in quantity, and almost none of them
survive contact with an exact test. Deciding which candidate is real is the
expensive part, and that is the subject here.

Grinding is the name for the arrangement that answers it. Candidate generation
is treated as deliberately high in volume and low in precision, which is
affordable only because the workflow never treats generator output as evidence.
Exact arithmetic, a discriminating control, and the source document decide what
is admitted. Whatever survives is pushed until it either breaks something the
paper promises or runs into a wall that can be proved to be a wall.
Cryptanalysis has always rewarded depth, one person carrying one idea a long
way, and nothing here replaces that.

The snail is an emblem, not a claim about pace. What it stands for is the
trail: ground covered one claim at a time, in the open, where anyone can
retrace it. Whether this amounts to a new direction in cryptanalysis and in
mathematics more broadly is an open question. What follows is the yield and the
method that produced it.

The blunt version: pick a paper, read what it actually claims, attack every
step where the argument moves from one setting to another, turn each suspicion
into a precise statement, test that statement on exact small instances,
discard bad ideas quickly, and push whatever survives until it either stops for
a reason that can be written down or breaks the advertised security property.
Then check everything against the source and record what broke and what did
not.

\begin{figure}[!ht]
\centering
\resizebox{\linewidth}{!}{%
\begin{tikzpicture}[
  font=\small,
  box/.style={draw, rounded corners=2pt, align=center, inner sep=5pt},
  plain/.style={align=center, inner sep=3pt},
  >={Stealth[round,length=5pt]},
  every path/.style={thick}
]
\node[plain] (paper) {\textbf{PAPER}};
\node[box, below=5mm of paper, text width=5.0cm] (gen)
  {AI dreams up a thousand\\ways it could break};
\node[box, below=5mm of gen] (test) {\textbf{EXACT TEST}};
\node[plain, below left=9mm and 12mm of test] (fails) {FAILS};
\node[plain, below right=9mm and 12mm of test] (works) {WORKS};
\node[box, below=5mm of works] (push) {\textbf{PUSH MORE}};
\node[box, below=5mm of push, text width=5.0cm] (more)
  {bigger consequence?\\proof $\rightarrow$ attack $\rightarrow$ key};
\node[box, below=5mm of more] (check)
  {\deleted{\textbf{CHECK ANALYSIS AND STUFF}}
   \ifdefined\showtrackedchanges\\[-1pt]\fi
   \added{\textbf{HUMAN AUDIT}}};

\draw[->] (paper) -- (gen);
\draw[->] (gen) -- (test);
\draw[->] (test) -- (fails);
\draw[->] (test) -- (works);
\draw[->] (works) -- (push);
\draw[->] (push) -- (more);
\draw[->] (more) -- (check);

\draw[->] (fails.west) -- ++(-9mm,0) |- (gen.west);
\draw[->] (check.east) -- ++(24mm,0) |- (gen.east);
\end{tikzpicture}}
\caption{Most ideas fail. Survivors are pushed until they either hit a proved
boundary or break the advertised security property. Everything that survives
is rechecked against the exact paper version and an exact calculation or
executable witness. \deleted{The autonomous path reaches a candidate insight;
the final box is the human promotion boundary.}\changegap
\added{The autonomous path produces a reproducible candidate result; human
review determines its cryptanalytic status.}}
\label{fig:loop}
\end{figure}

The push is where the results come from. Auditing a proof once is ordinary.
Every surviving result carries a stated limitation, and that limitation is the
next thing to attack:

\[
\begin{gathered}
  \text{idea}\;\rightarrow\;\text{exact witness}\;\rightarrow\;
  \text{``only works when \dots''}\\
  \rightarrow\;\text{attack that limitation}\;\rightarrow\;
  \text{stronger result}.
\end{gathered}
\]

That move has two productive exits, so it is never wasted. A boundary that
moves gives a stronger attack. A boundary that refuses to move becomes a
maximality statement, which is also a result. Six of the findings below reach
the top level and break an advertised security property; the rest stop lower,
and Table~\ref{tab:trace} records where each one stopped and why. Nobody camped
on level one for lack of trying. This is the operation that separates the
method from ordinary machine-assisted auditing. Auditing asks
whether a candidate flaw is real and stops when it has an answer. Grinding
treats every verified limitation of a real flaw as the next candidate, so a
confirmed result is an input rather than an endpoint. The arrangement works as a game
with levels, which is what Figure~\ref{fig:ladder} shows: a finding enters at
the left and climbs for as long as the evidence will carry it.

\begin{figure}[!ht]
\centering
\resizebox{\linewidth}{!}{%
\begin{tikzpicture}[
  font=\footnotesize,
  rung/.style={draw, rounded corners=2pt, align=center, inner sep=4pt,
               minimum height=8mm, text width=2.05cm},
  boss/.style={rung, line width=1.1pt},
  lvl/.style={font=\scriptsize\bfseries, inner sep=1.5pt},
  who/.style={align=center, font=\scriptsize},
  >={Stealth[round,length=4pt]},
  every path/.style={thick}
]
\node[rung] (a) {proof or spec flaw};
\node[rung, right=3.5mm of a] (b) {exact counter-example};
\node[rung, right=3.5mm of b] (c) {assumption break};
\node[rung, right=3.5mm of c] (d) {distinguisher};
\node[rung, right=3.5mm of d] (e) {key recovery};
\node[rung, right=3.5mm of e] (f) {plaintext recovery};
\node[boss, right=3.5mm of f] (g) {security game broken};
\foreach \x/\y in {a/b,b/c,c/d,d/e,e/f,f/g} {\draw[->] (\x) -- (\y);}

\node[lvl, above=1.1mm of a] {LVL 1};
\node[lvl, above=1.1mm of b] {LVL 2};
\node[lvl, above=1.1mm of c] {LVL 3};
\node[lvl, above=1.1mm of d] {LVL 4};
\node[lvl, above=1.1mm of e] {LVL 5};
\node[lvl, above=1.1mm of f] {LVL 6};
\node[lvl, above=1.1mm of g] {LVL 7 $\cdot$ BOSS};

\end{tikzpicture}}
\caption{The game. Every finding enters at level one and climbs for as long as
the evidence carries it, and it stops only when the next level cannot be
reached: either the bridge to a real consequence fails, or the wall ahead is
proved to be a wall. Table~\ref{tab:trace} records the level each finding here
reached.}
\label{fig:ladder}
\end{figure}

Generation needs direction, and the direction has two parts. The first is
where to look. A paper is attacked at its joins: a change of algebra,
\added{a representation or encoding change,} a
distribution carried through a map, a rounding or CRT conversion, a quotient,
an exceptional value such as zero or a nonunit, a quantifier that widens from
one object to many, and a parameter substituted into a script instead of into
the theorem it came from. Every finding below sits on one of those, and
assembling that list is the cheap part of the work.

The second part is what to ask on arrival, and two questions carry almost all
of it. Which efficiently computable map survives this join with useful signal
still in it, and which distribution actually results from it. Those two
questions are the axes of Section~\ref{sec:axes}, which is why that section
appears before the findings and not as a summary of them. Both are cheap to
ask, neither requires understanding a construction in full, and both can be
put to a generator that has read one paper and nothing else.

\begin{table}[!t]
\centering
\footnotesize
\begin{tabular}{@{}L{2.15cm}L{2.3cm}c L{2.5cm}L{2.6cm}c@{}}
\toprule
Target & Join attacked & Axis & First survivor & Limitation pushed & Level \\
\midrule
Ring-LWR &
Exceptional value inside an accepted set &
A &
$d=0$ opens the zero commitment to every message &
Binding only, so the three protocols were attacked next &
7 \\
\addlinespace
Spinel &
Representation change from bytes to walk symbols &
A &
Variable-length codewords concatenated without boundaries &
One pair only, so count the bytes and lengths it reaches &
7 \\
\addlinespace
Middle-product &
Ordinary product used where a wrapping one is assumed &
A, B &
Boundary coefficients peel the masks at $t=9$ &
Published bound vacuous from $t=16$, so branch instead of abort &
7 \\
\addlinespace
CRT-RLWE reaction &
Exact noise extraction meeting a validity signal &
A &
Coherent shifts cancel until the centre lift wraps &
One coefficient, so binary-search the whole polynomial &
7 \\
\addlinespace
MinRank hardening &
Rank raised by perturbation, then averaged &
A &
The $aa$ block is publicly computable &
Average is not the attacker's direction, so take the tail &
1 \\
\addlinespace
E-voting &
Distribution carried through a simulator &
B &
Simulator marginal differs from the genuine law &
Distance looked parameter-dependent, so bound it over all $p$ &
7 \\
\addlinespace
Hollow-LWE &
Quotient, on top of a published recovery &
A, B &
$W_{ij}V_{ij}^{-1}=d_id_{I_j}$ &
Signs fixed only per component, so use the public key relation &
7 \\
\addlinespace
Function-field &
CRT decomposition of an explicit basis &
A, B &
Galois orbit supported factorwise &
Projected rate might rise, so compute it &
5 \\
\addlinespace
Group ring, full &
Quantifier widened from one ring to a family &
A &
Augmentation is a public map onto $\mathbb Z$ &
One functional only, so count every sign character &
3 \\
\addlinespace
Group ring, quotients &
Quotient of the selected families &
A, B &
Rank-four maps onto $M_2(\mathbb F_q)$ &
Only the quotient image, so try to widen it &
3 \\
\addlinespace
Hollow-LWE parameters &
Parameter put into a script, not into its theorem &
B &
Estimator invoked at $k$, not $k-h$ &
Says nothing about the true level, so cost the repair &
1 \\
\bottomrule
\end{tabular}
\caption{The trace each result actually followed, from the join that was
attacked to the level of Figure~\ref{fig:ladder} at which it stopped. The
limitation column is the push: in every row it is a statement the first
survivor could not support, taken as the next hypothesis. Two of these pushes
ended in a proved boundary instead of a stronger attack, which
Lemma~\ref{l:twoexits} counts as the second productive exit.}
\label{tab:trace}
\end{table}

Table~\ref{tab:trace} records the trace each result followed, so that the
findings can be read as instances of the method rather than as work labelled
with it afterwards. Section~\ref{sec:mp} is the clearest case of pushing. Stopping at ``the attack
works at $t=9$ and the theorem starts at $t=98$'' would have been correct and
uninformative. Attacking that boundary showed first that the published failure
bound measures the wrong quantity and the attack in fact runs to $t=24$, and
then that $98$ is essentially the whole-ciphertext entropy threshold $96.32$
plus the security slack $2\lambda/(k+1)$. The requirement is not an artefact
of one reduction; it is the point at which the ciphertext stops determining
the masks.

\paragraph{\texorpdfstring{\deleted{What the machine does, and what it does
not.}\changegap\added{Autonomous search and human review.}}{Autonomous search
and human review.}}
A large language model reads the sources, enumerates candidate paths from a
claim to a consequence, proposes witnesses, and writes the checking code. Such
proposals are unreliable in a specific and predictable way: they are fluent,
frequently wrong in the direction of overclaiming, and they do not separate a
proof gap from a break unless forced to. The working rule is that no proposal
counts as evidence until it is reduced to an exact statement a short
deterministic program can check, alongside a control whose outcome was fixed
before it ran. Several
plausible candidates fail there, and one appears in
Section~\ref{sec:group} as a rejected strengthening because its failure is
informative. The division holds throughout: the machine supplies breadth of
conjecture, the exact machinery decides admissibility, and consequence, scope,
and provenance stay with the researcher.

\begin{center}
\deleted{\emph{The generator proposes. The admission boundary certifies.}}

\deleted{\emph{The generator proposes. The evidence gate certifies the
computation.}}
\ifdefined\showtrackedchanges\\[-1pt]\fi
\deleted{\emph{The human confirms the cryptanalytic claim.}}
\ifdefined\showtrackedchanges\\[-1pt]\fi
\added{\emph{Autonomous search proposes and tests candidate findings; human
review determines their cryptanalytic status.}}
\end{center}

\noindent
\deleted{The validity of the second stage is deliberately independent of the reliability
of the first, and this is what keeps the method from ageing with a model
generation. Nothing in Appendix~\ref{app:calculus} requires the generator to be
calibrated, trustworthy, or even describable as a probability distribution. It
is required to produce candidates in quantity and nothing else, so a different
generator, or none at all, changes the yield without touching the argument that
licenses a result.}\changegap
\added{Admission depends on exact checks and predetermined controls rather
than on generator confidence. Changing the generator therefore affects search
yield, but not the admission criterion.} \deleted{Machine assistance makes it cheap to ask the two
questions across many papers at once and contributes nothing to answering
them.} \deleted{Machine assistance makes it cheap to pose the two questions
across many papers and can autonomously propose and execute candidate answers.
It does not by itself establish source fidelity, cryptanalytic consequence,
scope, or provenance; those obligations remain for the subsequent human
audit.}\changegap
\added{Machine assistance makes it possible to pose the two questions across
many papers and test candidate answers autonomously.}
Appendix~\ref{app:calculus} states the same arrangement as a calculus.

\deleted{The autonomous stage ends at a handoff boundary, not at a
cryptanalytic conclusion. It produces a candidate bundle containing the exact
claim, source version, witness, precommitted control, code, execution
environment, run record, limitations, and preliminary result type. The system
may label this a reproducible candidate insight at the stated parameters; only
after the operator audits the bundle may it be promoted to a cryptanalytic
break, defect, or coverage gap.}\changegap
\added{The autonomous stage returns the claim under test, source version,
witness, control, code, execution environment, run record, known limitations,
and proposed classification. That handoff is a reproducible candidate result,
not a classified one.}

Three disciplines do the filtering. Every equality, modular, integer,
encoding, and bound predicate is evaluated in exact rational or integer
arithmetic; the binomial tails in Section~\ref{sec:voting} are exact rationals
over $27^{4096}$, not normal approximations. Every executable check carries a
control whose expected outcome is fixed in advance, listed in
Table~\ref{tab:repro}. Breaks, assumption
defects, and coverage gaps are recorded as different things and never merged.

One failure shows what the discipline exists to catch. The first end-to-end
run of the group-ring distinguisher did not separate genuine from uniform
samples: at a convenient small modulus the union
bound evaluates to $1.78$, so no separation is available there. The analysis
was right and the run did not support it. Rerunning at the parameter sequence
the analysis actually specifies gives complete separation. Any reported
reproduction not run at the stated parameters should be treated as unverified.

Costs quoted below are arithmetic operation counts, not wall-clock benchmarks,
unless stated otherwise.

A procedure of this kind invites one obvious objection, and it is the reason
Section~\ref{sec:scope} exists. Output consisting entirely of breaks would be
indistinguishable from a selection procedure with an unknown filter on it,
particularly when the generator is fluent and inclined to overclaim. The answer
is to record the whole distribution of outcomes and keep the classes apart:
attacks, assumption defects, coverage gaps, results belonging to other authors,
paths left unresolved, and strengthenings that were tried and failed. Three
targets produced no attack and are reported as such. One strengthening was
refuted and is kept for the boundary its refutation proved.

Which leaves the claim the arrangement rests on. Hypothesis generation and
cryptanalytic evidence do not need the same reliability. Generation can be made
deliberately high in volume and low in precision, provided an exact and
adversarially controlled admission boundary sits between a proposal and
evidence, and provided the limitations of every survivor are attacked in turn.
\added{Generator quality determines search yield; exact checks and controls
determine what survives.} The machine does not have to be reliable for the
research to be.

\section{Results at a glance}

Table~\ref{tab:abbrev} expands the abbreviations used throughout, and
Table~\ref{tab:glance} lists every result with its status and the number that
carries it. The derivations follow in the same order.

\begin{table}[!t]
\centering
\footnotesize
\begin{tabular}{@{}lL{4.0cm}lL{4.0cm}@{}}
\toprule
\multicolumn{4}{@{}l}{\emph{Assumptions and problems}}\\
LWE & Learning With Errors &
RLWE & Ring Learning With Errors \\
MLWE & Module Learning With Errors &
LWR & Learning With Rounding \\
MP-LWE & Middle-Product LWE &
MP-CLWR & Middle-Product Computational LWR \\
LPN & Learning Parity with Noise &
SIS & Short Integer Solution \\
HFE & Hidden Field Equations &
MinRank & the problem of finding a low-rank combination of public matrices \\
\addlinespace
\multicolumn{4}{@{}l}{\emph{Constructions and notions}}\\
PKE & Public-Key Encryption &
UPKE & Updatable Public-Key Encryption \\
ABE & Attribute-Based Encryption &
AEAD & Authenticated Encryption with Associated Data \\
IND-CPA & Indistinguishability under Chosen-Plaintext Attack &
IND-CR-CPA & the corresponding notion for ciphertexts predating a key update \\
\addlinespace
\multicolumn{4}{@{}l}{\emph{Other}}\\
CRT & Chinese Remainder Theorem &
SLH-DSA & Stateless Hash-Based Digital Signature Algorithm \\
FORS & Forest Of Random Subsets, a component of SLH-DSA &
WOTS+ & Winternitz One-Time Signature Plus \\
\bottomrule
\end{tabular}
\caption{Abbreviations, expanded here and at first use in the text.}
\label{tab:abbrev}
\end{table}

\begin{table}[!t]
\centering
\footnotesize
\begin{tabular}{@{}L{2.7cm}L{2.7cm}L{2.0cm}L{4.35cm}@{}}
\toprule
Target & Claim defeated & Status & Evidence \\
\midrule
Ring-LWR commitment and proofs~\cite{tmm2025} &
Binding, and soundness of all three protocols &
Break, probability one &
Zero multiplier opens one commitment to every message; extraction is
undefined for half the challenge pairs \\
\addlinespace
Spinel signature hash~\cite{spinel2026} &
Collision and second-preimage resistance of the hash &
Break, deterministic, no search &
Distinct messages share one walk string; $244$ of $255$ nonzero bytes have an
immediate second preimage \\
\addlinespace
Two-limb CRT-RLWE rerandomisation~\cite{eprint20261618} &
Composed malicious-rerandomiser claim &
Break, conditional on an observable reaction &
Coherent CRT shifts make the accept bit a noise threshold; $131072$ reactions
give the key \\
\addlinespace
HFE with $LL'$ perturbations~\cite{eprint2026404} &
The MinRank hardening behind the parameter tables &
Accounting and transcript defects &
Ambient dimension publicly reducible, averaged rank attacker-chosen, and a
degree-2 transcript invariant \\
\addlinespace
Middle-product PKE, $t=9$ rows~\cite{rsss2017,bai2019} &
IND-CPA confidentiality &
Break, passive, one ciphertext &
All $2313$ masks and the plaintext recovered, success
$\ge0.999999730074$; range extends to $t=24$ \\
\addlinespace
Lattice e-voting receipts~\cite{fpsw2025} &
Receipt-freeness and vote privacy against a transcript coercer &
Break, unconditional &
Simulator distance $\ge1/15$ for every permitted parameter; test advantage
$0.999999804160$, and provably optimal \\
\addlinespace
Hollow-LWE UPKE~\cite{abl2025} &
IND-CR-CPA confidentiality of pre-update ciphertexts &
Break, given the published permutation recovery &
A valid old key recovered and the challenge decrypted $12/12$ at three
dimensions, in $O(k^3+nk^2)$ operations \\
\addlinespace
Function-field Normal Ring-LPN~\cite{bcd2022} &
Search hardness of the explicit constructive basis &
Break of the printed instance &
$X^{63}+X^7+1$ splits into seven degree-nine factors; $3584$ candidate tests
replace $2^{63}$ \\
\addlinespace
Semidirect group-ring LWE~\cite{lf2026} &
Decision hardness, full rings and selected quotients &
Assumption defect &
Two samples distinguish with false acceptance $64N^{-7}\log^2N$; quotient
image shown maximal \\
\addlinespace
Hollow-LWE parameters~\cite{abl2025} &
The printed 128, 192 and 256-bit labels &
Certification defect &
The estimator is called at dimension $k$, not $k-h$; repairing both defects
costs $1.24$ to $1.54\times$ \\
\addlinespace
Hint-MLWE~\cite{klss2023} &
The unrestricted parameterised definition &
Assumption defect &
A permitted degenerate choice gives advantage $1-q^{-nm}$ \\
\bottomrule
\end{tabular}
\caption{Every result, its status, and the evidence carrying it.
Section~\ref{sec:scope} adds three targets that produced no attack.}
\label{tab:glance}
\end{table}

\paragraph{How results are classified.}
Three categories are kept apart throughout, because merging them would
misstate the evidence. A \emph{break} reaches a security property the source
claims: key or plaintext recovery, binding, soundness, or a named privacy
notion. An \emph{assumption or certification defect} shows that a stated
claim is false or unsupported without exhibiting an attack on a construction.
A \emph{coverage gap}, collected in Section~\ref{sec:scope}, is a reading of
what a source actually proves, not a defect in it. Results attributable to
other authors are marked at the point of use and are never counted here.

\section{Two axes: what to ask at each join}
\label{sec:axes}

The two axes below are search directions before they are a classification.
They are the questions put at every join, and they are what a generator is
pointed at when it reads a construction for the first time. The constructions
analysed here come from different lines of work and rest on different
assumptions, and the same two questions apply to all of them without
modification, which is the property that makes the search worth running at
volume.

The axes also classify every defect found, and that is worth stating carefully.
A search that only asks two questions will only return answers to those two
questions, so the classification is a property of the search and not a
discovery about lattice cryptography. What the classification does establish is
narrower and still useful: neither question was exhausted by a single target,
and both kept producing across unrelated constructions.

\paragraph{Axis A: a public map that should not have survived.}
A construction hides a relation and publishes objects derived from it. If some
efficiently computable map sends the published objects somewhere the hidden
relation is either trivial or enumerable, the assumption is not doing the work
it appears to do. The maps here are unglamorous: multiplication by
zero, reading the first and last coefficients of a polynomial product,
reduction modulo a group-theoretic ideal, summing group-ring coefficients, and
squaring a code coordinatewise. \added{The same axis includes representation
changes: a non-uniquely-decodable input encoding identifies distinct messages
before the advertised algebraic problem is reached.} None requires lattice
reduction.

\paragraph{Axis B: a distribution that is not the claimed law.}
A construction, a simulator, or a parameter script asserts that some object
follows a distribution. If the asserted law and the actual law differ, the
consequences range from a distinguisher to an invalid security label. The
mismatches here are a simulator whose marginal is wrong at every permitted
parameter, an error law that a decision definition requires but never supplies,
a projected noise rate that fails to rise when the dimension drops, and a
lattice estimator invoked at the wrong source dimension.

Table~\ref{tab:axes} places each target on both axes. Several sit on both,
which is why they are decisive: a public map exposes an object, and the
distribution of that object is not what the proof needs.

\begin{table}[t]
\centering
\small
\begin{tabular}{@{}L{3.6cm}L{4.3cm}L{4.3cm}@{}}
\toprule
Target & Axis A: the public map & Axis B: the distribution \\
\midrule
Ring-LWR commitment &
Multiplication by the permitted $d=0$ &
Not load-bearing here \\
Spinel signature hash &
Byte-to-ternary encoding, non-injective under concatenation &
Not load-bearing here \\
Two-limb CRT-RLWE &
Coherent shift surviving the CRT recombination &
Centre-lift wrap moves the plaintext by exactly one \\
HFE with $LL'$ perturbations &
Principal $aa$ block, and the kernel of a transcript relation &
Differential rank taken at its average, not its tail \\
Middle-product PKE, $t=9$ &
Boundary coefficients of an ordinary polynomial product &
Binary masks make each layer a $2^t$ subset sum \\
Lattice e-voting receipts &
Public special decryption of the claimed vote &
Simulator marginal differs from the genuine convolution law \\
Hollow-LWE UPKE &
Schur square, then a signed-graph traversal &
Estimator invoked at dimension $k$ instead of $k-h$ \\
Function-field Normal Ring-LPN &
Projection onto one CRT component &
Projected Bernoulli rate unchanged, not amplified \\
Semidirect group-ring LWE &
Augmentation, and rank-four quotients &
Decision definition supplies a set, not a distribution \\
\bottomrule
\end{tabular}
\caption{Each target placed on both axes.}
\label{tab:axes}
\end{table}

\section{Findings}

Findings are ordered by demonstrated impact: affected scope, consequence,
centrality to the assurance claim, and evidence strength.

\subsection{A Ring-LWR commitment opens to every message}

\finding{The commitment verifier accepts an opener-chosen multiplier whose
permitted values include zero, so one commitment opens to two distinct
messages for every public key.}

The construction of Talapatra, Mishra, and Mukhopadhyay~\cite{tmm2025} commits
to $m$ by hashing and rounding,
\[
  \mathbf c=\operatorname{Round}_{q\rightarrow p_1}
       \bigl(\mathbf aH(\operatorname{enc}(m)\mathbin\|\operatorname{enc}(r))\bigr).
\]
The honest opening uses multiplier $1$, but verification accepts any $(m,r,d)$
with
\[
  d\,\operatorname{Round}_{q\rightarrow p_1}
       \bigl(\mathbf aH(\operatorname{enc}(m)\mathbin\|\operatorname{enc}(r))\bigr)=\mathbf c,
  \qquad \lVert d\rVert_\infty\leq1,
  \qquad \deg d<n/2 .
\]
The accepted set contains $d=0$. Taking $\mathbf c=0$ and $d_0=d_1=0$ makes
both verification equations read $0=0$ for any two distinct messages,
independently of the hash outputs and the public key. Binding fails with
probability one at every valid parameter set.

The same cancellation defeats the proof system. In the preimage protocol, a
prover sends $\mathbf c_{\mathrm{aux}}=0$, receives the challenge $\delta$, and
answers with $\mathbf t=-\delta\mathbf c$ and $s_m=0$, opening the auxiliary
commitment with multiplier zero. Both verifier equations become $0=0$ for every
challenge, so the failure is in soundness, not in extraction
alone. The two derived protocols are undefined as printed, since they add
elements of $R_p^k$ to elements of $R_{p_1}^k$; after the natural type
correction they accept explicit false statements under the same strategy.
Independently, the special-soundness extractor divides by challenge
differences such as $x-1$, which is a zero divisor in
$\mathbb F_2[x]/(x^{512}+1)$ because $x^{512}+1=(x+1)^{512}$ there. That
example is not isolated. Reduction modulo two lands in the local ring
$\mathbb F_2[x]/((x+1)^{512})$, so an element is a unit exactly when its
modulo-two weight is odd; challenges are binary, so the difference is a unit
exactly when the Hamming distance is odd. The fraction of distinct challenge
pairs at even distance is $(2^{255}-1)/(2^{256}-1)=\tfrac12-1/(2^{257}-2)$, so
an extractor drawing two independent uniform challenges divides by a nonunit
with probability one half. That is a constant, not a negligible quantity, and
repetition does not repair it.

The predecessor Ring-LWE commitment~\cite{bklp2015} inverts its opening
factor and therefore excludes zero. Ring-LWR hiding is untouched by this
attack; what fails is binding, and the soundness of everything built on it.

\subsection{A signature hash collides before its hard problem is reached}
\label{sec:spinel}

\finding{An encoding that converts bytes to walk symbols by suppressing leading
zeros is not injective, so two distinct messages drive the same algebraic walk
and produce the same digest with no search.}

Spinel~\cite{spinel2026} keeps the SLH-DSA architecture and replaces its hash
with an algebraic one, whose security rests on navigating a Cayley graph of
$\mathrm{SL}_4(\mathbb F_p)$. That problem is never reached. Each byte
$b\in\{0,\ldots,255\}$ is written in base $3$, which needs six trits since
$3^6=729>256$; the specification then suppresses the leading zeros, maps each
remaining trit $t$ to $t+1$, and concatenates the resulting codewords $C(b)$
into the string $E(M)=C(b_1)\Vert\cdots\Vert C(b_m)$ that drives the walk.

Every byte does have a unique minimal representation, and that is the step the
argument turns on: the source transfers the underlying hash's security on the
ground that it applies only one-to-one encodings to the input and output. Per
byte the code is indeed injective. A concatenation of variable-length
codewords without boundaries is not. Since $\mathtt{0x01}=(1)_3$ and
$\mathtt{0x04}=(11)_3$,
\[
  E(\mathtt{01}\,\mathtt{04})=2\Vert22=222=22\Vert2=E(\mathtt{04}\,\mathtt{01}),
\]
so the walk receives one string for two messages and every later matrix
operation agrees. The witness is printable and length-preserving as well:
$\mathtt{0x28}=(1111)_3$ and $\mathtt{0x79}=(11111)_3$ give
$E(\texttt{"(y"})=E(\texttt{"y("})=2^9$. No group element is computed to find
either pair, and neither depends on $p$, on the generators, or on the output
serialisation.

Second preimages follow with the target fixed first. The codeword for
$\mathtt{0x04}$ is $22$, and $\mathtt{01}\,\mathtt{01}$ encodes to $2\Vert2$,
so $\mathtt{01}\,\mathtt{01}$ is a second preimage of the one-byte message
$\mathtt{04}$. A nonzero byte admits this splitting unless its minimal
representation has a single nonzero trit, which happens only for the eleven
values $d\cdot3^j$ with $d\in\{1,2\}$, so $244$ of the $255$ nonzero values
have an immediate second preimage. Length is not a defence either. Exhaustive
enumeration of the $255^2=65025$ two-byte strings over nonzero bytes gives
$49963$ distinct encodings, with $26730$ inputs in nontrivial collision
classes; reading a uniform $64$-byte value as $32$ two-byte blocks, at least
one block admits a same-length replacement \deleted{with probability}\changegap
\added{with probability at least}
$1-(1-26730/65536)^{32}=0.999999948$. Multicollisions are immediate, since
$\texttt{"(y"}$ and $\texttt{"y("}$ may be substituted independently at $r$
positions to give $2^r$ equal-length messages with one digest.

\added{The source instantiates every call to the tweakable hash
$T_h(P,T,M)=H(P\Vert T\Vert M)$ by passing the byte string $B=P\Vert T\Vert M$
to its hash, so $E(M_0)=E(M_1)$ gives
\[
  H(P\Vert T\Vert M_0)=H(P\Vert T\Vert M_1)
\]
for every fixed prefix and tweak.

Whether that reaches a signature forgery turns on one function the source does
not pin down. In SLH-DSA the message digest $H_{\mathrm{msg}}$ is not a
tweakable-hash call, and the specification here fixes only the tweakable hash.
If $H_{\mathrm{msg}}$ is instantiated with the same primitive, a chosen-message
adversary requests a signature on $M_0$ and returns it unchanged with $M_1$:
verification derives the same digest and walks the same FORS, WOTS+, and
hypertree path, so the substitution succeeds with probability one. If
$H_{\mathrm{msg}}$ retains a conventional hash, the two messages give different
digests and different indices, and this substitution fails. The collision and
second-preimage results above hold either way.}

The repair is the construction the paper states before it optimises. Retaining
all six trits fixes the boundaries and restores injectivity, and any uniquely
decodable byte code does the same. Prefixing the total message length does
not, because the witnesses above already have equal lengths.

A second and unrelated defect sits in the parameter argument. The digest is
sixteen $32$-bit words holding the entries of a matrix of
$\mathrm{SL}_4(\mathbb F_p)$ with $p=2^{31}-1$, described as $2^{512}$ possible
values. The top bit of each word is always zero and the determinant is
constrained, so with
$|\mathrm{SL}_4(\mathbb F_p)|=p^{6}(p^{2}-1)(p^{3}-1)(p^{4}-1)$ the digest
space is $\log_2|\mathrm{SL}_4(\mathbb F_p)|=465.000$ bits, and the idealised
classical birthday scale is $2^{232.5}$. Repairing the encoding does not change
this.

\par\smallskip\noindent
\deleted{\textit{Not claimed.} First-preimage resistance. Writing
$H=\Phi\circ E$, what fails is the injectivity of $E$; given an arbitrary
target matrix an attacker must still find a walk reaching it, and the ambiguity
only supplies alternative byte representations of a preimage already found.
Nor is the $\mathrm{SL}_4(\mathbb F_p)$ navigation problem attacked,
classically or quantumly: the collision is available before the walk begins,
so a quantum adversary gains nothing here. A signature forgery by message
substitution follows from the written design if the Spinel primitive
instantiates the message digest function, but no implementation was available
to test $\mathrm{Verify}(pk,M_1,\sigma)$ for a signature on a colliding $M_0$,
so no forgery is claimed.}\changegap
\added{\textit{Not claimed.} First-preimage resistance: what fails is the
injectivity of $E$, and given an arbitrary target matrix an attacker must still
find a walk reaching it. Nor is the $\mathrm{SL}_4(\mathbb F_p)$ navigation
problem attacked, classically or quantumly. The message substitution above is
claimed only under the stated condition on $H_{\mathrm{msg}}$, and no
implementation was available to test it.}
\par

\subsection{Middle-product encryption is decrypted from one ciphertext}
\label{sec:mp}

\finding{At both published $t=9$ parameter sets, the boundary coefficients of
the first ciphertext component recover all $2313$ binary encryption masks, and
the plaintext follows by public computation.}

The MP-LWE scheme of Ro\c{s}ca, Sakzad, Stehl\'e, and
Steinfeld~\cite{rsss2017} and the MP-CLWR scheme of Bai et al.~\cite{bai2019}
both publish
\[
  c_1=\sum_{i=1}^{t}r_i a_i\pmod q,
\]
where each $r_i$ has binary coefficients. This is ordinary, non-cyclic
polynomial multiplication, which is triangular at both ends. The constant
coefficient of $c_1$ depends only on the constant coefficients of the $r_i$;
once those are removed, the next coefficient exposes the next layer. After
subtracting the earlier layers, layer $\ell$ satisfies
\[
  T_\ell=\sum_{i=1}^{t}a_{i,0}r_{i,\ell}\pmod q,
\]
so a single public table of $2^t$ subset sums decodes every layer. The
highest-degree coefficients give an independent reverse recurrence built from
$(a_{i,n-1})_i$.

Injectivity fails only if some nonzero $d\in\{-1,0,1\}^t$ has
$\sum_ia_{i,0}d_i=0$. The vectors $d$ and $-d$ define the same event, so there
are $(3^t-1)/2$ sign classes, and in each the chosen nonzero coordinate is
$\pm1$, hence a unit modulo every $q$. The two endpoint tables depend on
disjoint independent coefficient vectors, so both fail with probability at
most $\varepsilon^2$ where $\varepsilon=(3^t-1)/(2q)$. At the published
$t=9$ and $q=18941623$,
\[
  \varepsilon=\frac{9841}{18941623},
  \qquad
  1-\varepsilon^2=0.999999730074\ldots,
\]
with tables of only $512$ entries each.

For MP-LWE the plaintext is then $\mu=c_2-\sum_ir_i\mathbin{\odot_d}b_i$. For
MP-CLWR, each inverse-rounding fiber is $1023$ or $1024$ consecutive residues,
so a public midpoint lift $L$ is at centered distance at most $512$ from the
private lift, and
\[
  \lVert v-v_0\rVert_{\infty,q}\le 9\cdot257\cdot512=1184256<q/8=2367702.875 .
\]
The reconciliation identity therefore returns the hash input, and one
random-oracle query recovers the plaintext. The IND-CPA advantage is at least
$(1-\varepsilon^2)/2=0.499999865037\ldots$

The published stopping point is not where the attack stops. The bound
$\varepsilon=(3^t-1)/2q$ measures whether an endpoint table is \emph{injective},
and at this modulus it exceeds $1$ once $t\ge16$, so beyond that it asserts
nothing. The attack still works, because a collision does not defeat the
recurrence: it forces a branch, and a wrong branch fails at the next layer,
whose residual then has no preimage. The governing quantity is the expected
number of spurious preimages per layer, $2^t/q$. Equivalently the recurrence
must fix $t$ bits per layer while each layer supplies $\log_2q$ bits, so it
collapses when $2^t$ approaches $q$.

Measured at the full published dimensions over three key generations per row,
the true masks are recovered in $3/3$ trials at $t=9,12,16,20,22$ and $24$, and
in $0/3$ at $t=25$. Mean explored nodes rise from $258$ at $t\le16$ to $2526$
at $t=24$, against a minimum of $258$; the largest branch factor observed was
$6$. The collapse point matches $\log_2q=24.13$.

\begin{table}[!ht]
\centering
\small
\begin{tabular}{@{}rrrrr@{}}
\toprule
$t$ & $2^t/q$ & published $(3^t-1)/2q$ & recovered & mean nodes \\
\midrule
9  & 0.0000 & 0.00052 & 3/3 & 258 \\
16 & 0.0035 & 1.136   & 3/3 & 258 \\
20 & 0.0554 & 92.0    & 3/3 & 273 \\
22 & 0.2214 & 828     & 3/3 & 331 \\
24 & 0.8857 & 7455    & 3/3 & 2526 \\
25 & 1.7715 & 22370   & 0/3 & capped \\
\bottomrule
\end{tabular}
\caption{Mask recovery against the number of mask polynomials, at the full
published dimensions over three key generations per row. The published bound
exceeds one from $t=16$ and so asserts nothing beyond that point, while the
attack runs to $t=24$ and collapses at $t=25$, where $2^t$ passes $q$.}
\label{tab:mprange}
\end{table}

A second limit applies to every algorithm, and it turns out to be the schemes'
own requirement in disguise. An IND-CPA adversary holds $c_1$, $c_2$, and both
candidate messages, so the observable ciphertext has $(n+k)+d=1024=k+d+n$
coefficients, exactly the quantity appearing in the two security lemmas.
Counting bits, the masks stop being determined past
$t^*=(k+d+n)\log_2q/(k+1)=96.32$, while the lemmas require
$t\ge(2\lambda+(k+d+n)\log_2q)/(k+1)=97.3201$, hence $98$. The two differ by
exactly $2\lambda/(k+1)=0.9961$, one unit of $t$.

The randomness requirement is therefore not an incidental inequality: it says
precisely that the mask entropy exceeds the ciphertext size by $2\lambda$
bits. That settles why the boundary attack cannot reach the theorem regime. It
fails there because the ciphertext stops determining the masks, not because a
table became large.

The scope is exact. Setting $\lambda=128$ in the schemes' own security lemmas
requires $t\ge98$, a factor $4.1$ over the concrete attack and only $1.017$
over $t^*$. No published parameter set uses $9<t<98$, so the extended range breaks no
further printed row. What it changes is the margin, and what explains it. The reason usually given,
the size of the $2^t$ table, is not the binding constraint; under-determination
is. Bai et
al.~\cite[Section 6.2.2]{bai2019} already treat mask recovery as an attack
target, and the forward recurrence is the MPSign coefficient-layer
attack~\cite{mpsign2020} specialised to a binary subdomain. What is added here
is the exact sign-class injectivity bound, the independent reverse endpoint,
the near-one success probability over key generation, and plaintext recovery
through both encryption interfaces.

\subsection{A lattice e-voting protocol loses receipt-freeness}
\label{sec:voting}

\finding{The printed fake-receipt simulator does not reproduce the genuine
cumulative randomness law at any permitted ternary parameter, so a coercer who
receives a transcript can read the vote and reject fabricated receipts.}

Receipt-freeness is the property that a voter cannot prove to anyone else how
they voted, even if they want to. It is what stands between a secret ballot and
a market in votes, and it is usually obtained by letting the voter fabricate a
receipt for any candidate: if a convincing fake exists, a genuine receipt
proves nothing. The construction here provides such a simulator, and the
attack is that its output is distinguishable from the real thing.

In the cast-as-intended protocol of Farzaliyev, P\"arn, Saarse, and
Willemson~\cite{fpsw2025}, the receipt carries the cumulative randomness
$r_v=r_{vd}+r_s$ and the re-randomized ciphertext $c^{*}$, and public special
decryption returns the claimed vote as $c^{*}-Ar_v$ with no secret key.
Privacy therefore rests entirely on the voter's ability to fabricate an
indistinguishable receipt for a different vote.

Each ternary coefficient satisfies $\Pr[X=\pm1]=p$ and $\Pr[X=0]=1-2p$, so a
genuine cumulative coefficient is the convolution $Y=X_1+X_2$ with
\[
  \Pr[Y=\pm2]=p^2,\qquad
  \Pr[Y=\pm1]=2p(1-2p),\qquad
  \Pr[Y=0]=1-4p+6p^2 .
\]
The printed simulator keeps the positions where the genuine value is $\pm2$
and replaces every other coefficient by a uniform element of $\{-1,0,1\}$,
giving $\Pr[Y^{*}=\pm2]=p^2$ and
$\Pr[Y^{*}=-1]=\Pr[Y^{*}=0]=\Pr[Y^{*}=1]=(1-2p^2)/3$. The one-coordinate
total variation distance is exactly
\[
  \Delta(Y,Y^{*})=\frac{2(10p^2-6p+1)}{3}
   =\frac{20}{3}\Bigl(p-\frac{3}{10}\Bigr)^{2}+\frac{1}{15}
   \;\ge\;\frac{1}{15},
\]
with equality only at $p=3/10$. The bound is uniform over the permitted range
$0<p<1/2$, so no choice of the ternary parameter repairs the simulator. This
is the sharp form of the result: the defect is structural, not a parameter
accident.

The gap is concentrated in the zero mass, $\theta_R=1-4p+6p^2$ against
$\theta_F=(1-2p^2)/3$, so counting zero coefficients in one $d$-coefficient
component suffices. At the published $d=4096$ and $p=1/3$, accepting a receipt
as genuine when the zero count is at least $1210$ gives exact binomial tails
\[
\begin{aligned}
  \Pr[D(\text{fake})=\text{genuine}]&=1.046943719579\cdot10^{-7},\\
  \Pr[D(\text{genuine})=\text{fake}]&=9.114537336335\cdot10^{-8},
\end{aligned}
\]
hence $\operatorname{Adv}(D)=0.999999804160$. For the centered-binomial
marginal $p=5/16$ the optimal cutoff $1235$ gives $0.999997734193$. One
polynomial component suffices, so the attack does not depend on resolving the
source's inconsistent use of $d$ against $3d$.

Under the componentwise reading of the simulator across all $3d=12288$
coefficients, the cutoff $3630$ lowers the total error to
$2.010188913363\cdot10^{-19}$. The one-component figure is the one to quote,
because it does not depend on that reading.

The zero count is also close to the best possible statistic, and the exact
sense in which it is best determines how the result can be attacked. The
optimal advantage of any test is the total variation distance between the two
product laws. Computing that distance exactly over the full multinomial and comparing
it with the induced law of a candidate statistic shows that the pair

\[
  \bigl(\text{zero count},\ \#\{\text{coefficients equal to }\pm2\}\bigr)
\]

is sufficient: its distance equals the optimal one at every tested $d$ and for
both $p=1/3$ and $p=5/16$. The zero count alone is strictly weaker, for
instance $0.187120914287$ against the optimal $0.195640801881$ at $d=8$ and
$p=1/3$. So the exact optimal test thresholds the zero count conditionally on
the observed $\pm2$ count, and no further statistic helps. The $\pm2$ cells
carry no information alone, since the simulator reproduces them with
probability $p^2$ exactly as the genuine law does; they matter only as the
conditioning variable.

The complete coercer computes $v_T=c^{*}-Ar$, checks that it is the demanded
candidate, and then applies the zero-count test to $r$. A genuine receipt is
accepted and reveals the actual vote; the printed fake receipt is rejected. No
lattice reduction, MLWE oracle, secret key, server interaction, or malicious
verifier behaviour is required. BGV security, MLWE hardness, and ballot secrecy
against an outsider holding no transcript are not attacked.

A repair must change the protocol or the simulation strategy, since
reparameterisation is excluded by the universal bound. Candidate directions
are a fully zero-knowledge re-randomization proof, sampling the fake
cumulative randomness from the exact convolution law subject to every publicly
visible constraint, or preventing a transferable party from holding both the
randomness and the matching ciphertext.

\subsection{A recovered permutation yields the old decryption key}
\label{sec:hollow}

\finding{The published Schur-product recovery of the hidden permutation in an
updatable encryption scheme extends, by linear algebra alone, to recovery of a
valid old decryption key and decryption of the pre-update challenge.}

The Hollow-LWE updatable public-key encryption scheme of Albrecht, Ben\v{c}ina,
and Lai~\cite{abl2025} has old key pair $A^{\mathsf T}r=u$ with
$r\in\{-1,1\}^n$. An honest update applies a signed permutation $O=DP$ and a
basis change $U$, publishing $H=DPAU$ and updating the secret to
$r^{*}=DPr$. In the scheme's IND-CR-CPA experiment the adversary receives the
old-key challenge ciphertext together with the updated public key and the
exposed updated secret key.

Battagliola, Mora, and Santini~\cite{bms2025} recover $P$, because squaring
removes the signs:
\[
  (DPv)\star(DPw)=D^2P(v\star w)=P(v\star w),\qquad D^2=I .
\]
That result is theirs, it is prior public work, and it is not claimed here.
Their analysis stops at the code-equivalence layer, leaving $D$ apparently
still hiding the key.

It does not. Write
\[
  B=PA,\qquad H=DBU,\qquad
  D=\operatorname{diag}(d_1,\ldots,d_n),\quad d_i\in\{-1,1\}.
\]
Choose $k$ row indices $I=(I_1,\ldots,I_k)$ with $B_I$ invertible and set
$V=BB_I^{-1}$ and $W=HH_I^{-1}$. Since $H_I=D_IB_IU$ we have $W=DVD_I$, so at
every nonzero $V_{ij}$,
\[
  \lambda_{ij}:=W_{ij}V_{ij}^{-1}=d_id_{I_j}\in\{-1,1\}.
\]

Form a graph on the row indices $[n]$ with an edge between $i$ and $I_j$
whenever $V_{ij}\neq0$, labelled $\lambda_{ij}$. Within each connected
component $C$, multiplying edge labels determines public signs
$\delta_i\in\{-1,1\}$ with $d_i=\varepsilon_C\delta_i$ for one unknown
$\varepsilon_C\in\{-1,1\}$ per component. Consistency around cycles is
guaranteed by the existence of the true $D$. A component containing no index
$I_j$ consists of rows whose corresponding row of $V$ is zero, and its sign
does not affect $B$.

The component signs are then fixed by public data, which is the step that makes
the recovery unconditional. The exposed updated secret satisfies
$r^{*}=DPr$, so for a choice of component signs write
$\widetilde r(\varepsilon)_i=\delta_i\varepsilon_{C(i)}r^{*}_i$, the true
choice giving $\widetilde r=Pr$. The old public-key relation
$A^{\mathsf T}r=u$ is $B^{\mathsf T}\widetilde r=u$, and since $B=VB_I$ this is
$V^{\mathsf T}\widetilde r=y$ with $y:=B_I^{-\mathsf T}u$. Put
$g_j:=\sum_{i=1}^nV_{ij}\delta_ir^{*}_i$. If $C_j$ is the component containing
$I_j$, then $V_{ij}\neq0$ forces $C(i)=C_j$, so
$\bigl(V^{\mathsf T}\widetilde r(\varepsilon)\bigr)_j=\varepsilon_{C_j}g_j$.
For every component meeting some $I_j$ with $g_j\neq0$, set
$\varepsilon_C=y_jg_j^{-1}$; the true update relation guarantees this lies in
$\{-1,1\}$ and does not depend on which nonzero coordinate of the component is
used. If all such $g_j$ vanish, the same relation forces the corresponding
$y_j$ to vanish and either sign satisfies the equation. Components meeting no
$I_j$ are free.

This produces $\widetilde r'\in\{-1,1\}^n$ with $B^{\mathsf T}\widetilde r'=u$,
hence $r':=P^{\mathsf T}\widetilde r'\in\{-1,1\}^n$ with
$A^{\mathsf T}r'=u$: a valid old decryption key. If the component signs are
uniquely determined then $r'=r$; otherwise the public data admit several
compatible signed secrets and any of them serves. For an old-key ciphertext
$c_0=Ax+e$ and
$c_1=\langle u,x\rangle+e_0+\lfloor q/p\rceil m$,
\[
  c_1-\langle r',c_0\rangle
  =e_0-\langle r',e\rangle+\Bigl\lfloor\tfrac qp\Bigr\rceil m,
\]
because $A^{\mathsf T}r'=u$. Since $r'\in\{-1,1\}^n$ carries the norm bound of
an honestly generated key, the scheme's own correctness argument returns the
challenge message.

A compatible signed update follows. With
$D'=\operatorname{diag}(\delta_i\varepsilon_{C(i)})$ and
$U'=B_I^{-1}D'_IH_I$, the edge relations give $W=D'VD'_I$ including the zero
entries, hence $H=D'BU'$. Permutation recovery therefore lifts by linear
algebra to a valid signed update relation and an old decryption key, both
publicly checkable against $H=DPAU$.

After the permutation is known the work is two $k\times k$ inversions, products
of $n\times k$ by $k\times k$ matrices, graph propagation, and the public-key
consistency equations: $O(k^3+nk^2)$ field operations with classical matrix
arithmetic. There is no sign search and no lattice reduction. If permutation
recovery succeeds with probability $\varepsilon_P$, the IND-CR-CPA advantage is
about $\varepsilon_P/2$.

Reduced-dimension runs over $\mathbb F_{3329}$ recovered the update relation,
the originally sampled old key, and the challenge plaintext in $12/12$ trials
at each of
$(n,k)=(64,12)$, $(128,16)$, and $(256,24)$. A permutation altered by a single
transposition was rejected in $12/12$ trials. These runs validate the
post-permutation algebra; they assume $P$ and do not rerun the published
full-size recovery. The argument above requires only a valid old decryption
key; recovery of the originally sampled one is what these runs happened to
observe.

A second defect is independent of that attack. The scheme's reduction
transforms ordinary LWE of dimension $k-h$ into Hollow LWE of width $k$ with
hull dimension $h$, but the appended parameter script invokes the lattice
estimator with $n=k$. The printed security labels are therefore not certified
by the source's own reduction and estimator combination. This does not by
itself place the rows below those levels; it shows the certification is
computed at the wrong source dimension. A corrected search must estimate at
$k-h$, raise $k$, recompute $n$, $q$, correctness, and $h$, and iterate, since
changing $k$ changes all of them.

Everything the published script does apart from the estimator call is
arithmetic, and reproducing it recovers the source's parameter tables exactly,
including $n$, $\log_2q$, both components of $h$, and both size columns on all
four rows. Two consequences follow, and they run against each other.

Repairing the dimension defect alone makes the code-equivalence exposure
worse. It suffices to raise $k$ until $k'-h'\ge k$, since the script already
certified dimension $k$ and hardness is monotone in the dimension, so that
direction needs no estimator call. But $n=\lceil(1+c)k\log_2q\rceil$ grows with
$k$ while $h$ does not, so every repaired row sits further inside the weak
regime, with $\sqrt{2n}$ rising from $120.9$ to $127.5$ on the first row and
similarly on the others.

Repairing both requires a fixed point, because raising $h$ above $\sqrt{2n}$
forces $k$ up to preserve $k-h\ge k_0$, which raises $n$ and $\sqrt{2n}$
again. Iterating gives:

\begin{table}[!ht]
\centering
\small
\begin{tabular}{@{}lrrrrr@{}}
\toprule
Row & $k\to k^{*}$ & $h\to h^{*}$ & $n^{*}$ & ciphertext & update token \\
\midrule
128, $p=2$  & $450\to600$   & $27\to140$ & 9750  & $1.34\times$ & $1.33\times$ \\
128, $p=16$ & $550\to750$   & $26\to179$ & 15938 & $1.54\times$ & $1.54\times$ \\
192         & $900\to1150$  & $37\to228$ & 25875 & $1.28\times$ & $1.28\times$ \\
256         & $1250\to1550$ & $48\to272$ & 36813 & $1.24\times$ & $1.24\times$ \\
\bottomrule
\end{tabular}
\caption{Cost of repairing both Hollow-LWE defects at once. Estimating at
$k-h$ and leaving the weak Schur regime pull against each other, so the rows
are the fixed point of iterating both.}
\label{tab:hollowrepair}
\end{table}

The hull dimension has to grow by a factor between $5.2$ and $6.9$, and both
ciphertexts and update tokens grow by $1.24$ to $1.54$ times. Update tokens
were already the dominant cost, so the repaired scheme carries tokens between
roughly $1$ and $4.3$ megabytes. These rows are a lower bound on the repair,
since only the estimator-free direction is used.

\subsection{A rerandomisable ciphertext answers whether its noise wrapped}
\label{sec:reaction}

\finding{Shifting one ciphertext coefficient coherently in both CRT limbs
cancels in the plaintext until the exact-noise centre lift wraps, so the
authenticated payload's accept bit is a threshold oracle on the noise, and the
noise determines the key.}

The construction of Vodenicarevic et al.~\cite{eprint20261618} recovers the
exact noise during decryption: it computes $v_t=c_{0,t}-c_{1,t}s$ and $v_{q_2}=c_{0,q_2}-c_{1,q_2}s$,
centre-lifts the $q_2$ limb to obtain $\nu$, and returns
$M=(v_t-\nu)\Delta_t^{-1}\bmod t$. Embedding the plaintext so that it vanishes
modulo $q_2$ is what leaves that limb carrying only the noise, which the source
states as the feature enabling exact recovery without rounding. Its parameters set $\Delta_t=q_2\bmod t$,
which is the join the attack turns on, because it forces
\[
  q_2\Delta_t^{-1}\equiv1\pmod t.
\]

Add the same integer $\delta$ to one coefficient of $c_0$ in both limbs, leaving
$c_1$ untouched. While $\nu_j+\delta$ stays below $q_2/2$ the centre lift does
not wrap, the two copies of $\delta$ cancel, and the plaintext is unchanged, so
the inner authenticated payload verifies. As soon as the boundary is crossed the
lift subtracts $q_2$, and the identity above moves that one plaintext
coefficient by exactly one, so verification fails. The accept bit is therefore
the monotone predicate
\[
  O_j(\delta)=\bigl[\,\delta<\lfloor q_2/2\rfloor+1-\nu_j\,\bigr],
\]
and binary search returns $\nu_j$ exactly. The $q_2$ limb carries no message, so
once the whole noise polynomial is known $c_{1,q_2}s=c_{0,q_2}-\nu$ returns the
secret wherever $c_{1,q_2}$ is a unit; the secret is small, so centre-lifting
its residue gives the integer polynomial and the other limb with it.

The source's own scalar example reproduces this. At $t=7$, $q_2=11$,
$\Delta_t=4$, $s=2$, $M=5$ and $\nu=1$, the first rejecting shift is $\delta=5$,
which returns $\nu=1$ and then $s=2$. On a negacyclic ring at $n=64$ with a
$31$-bit $q_2$ the attack recovers every noise coefficient and the secret from
the accept bit alone, at $30$ queries per coefficient. At the source's
$q_2<2^{32}$ and $n=4096$ that is at most $4096\cdot32=131072$ adaptive
reactions from a single captured ciphertext.

The mitigation the source proposes does not reach this. It recommends returning
one generic failure indication so that a decryption oracle is not exposed, but
the attack never asks why a decryption failed: accept against generic reject is
the entire signal.

\notclaimed{The IND-CPA and IND\$ statements for the raw layer. Those concern
passive indistinguishability and this adversary is adaptive and active. What the
attack contradicts is the composed claim that wrapping the malleable layer
around an authenticated payload leaves a malicious rerandomiser with denial of
service only. The attack is conditional on the acceptance being observable: a
recipient exposing no difference between acceptance and failure admits no
oracle, and a proof that every rerandomisation is a bounded, correctly formed
encryption of zero would exclude the offsets. Both conditions are stated
wherever the result is.}

\subsection{A perturbation rank is averaged where the attacker chooses}
\label{sec:minrank}

\finding{A MinRank target raised by perturbation is computed over an ambient
dimension the attacker can publicly reduce, and at an average rank the attacker
need not accept.}

The scheme of Patarin, Vacek, and Roullet~\cite{eprint2026404} hardens HFE by
adding $r$ perturbations
$Q_i=L_iL_i'$, which raises the rank a MinRank attack must reach from $d$ to
about $d+r$. Three separate steps in that accounting do not hold. All figures
below use the source's own support-minor expression
$C_{\mathrm{MR}}(N,\mu)\approx\mu(N-1)^4\binom{2\mu+1}{\mu}^2$, so the
comparison stays inside its cost model.

The Dragon variant writes its public matrices as
$\widetilde H=\bigl(\begin{smallmatrix}\alpha&\beta\\0&0\end{smallmatrix}\bigr)$
and raises the ambient dimension from $n$ to $n+m$ because the mixed block
$\beta$ adds columns. The split between signature and hash variables is public,
so taking the principal $aa$ block of every public matrix deletes $\beta$ while
keeping the transformed $\alpha$. The target is unchanged and the dimension is
$n$: the $128$-bit row moves from $2^{131.89}$ to $2^{125.53}$, and the $80$ and
$100$-bit rows likewise fall below their labels.

For the degree-3 variant the source polarises a cubic perturbation, finds a
differential rank contribution of $0$, $1$ or $2$ with probabilities
$\tfrac18,\tfrac68,\tfrac18$, and averages it to one per perturbation. An
attacker picks the direction, so what matters is the lower tail of
$S=X_1+\cdots+X_r$. Searching for a direction with $S\le s$ and running MinRank
at $\mu=d+s$ costs $C_{\mathrm{MR}}(n+m,d+s)/\Pr[S\le s]$, and optimising $s$
puts the $128$-bit rows at $2^{100.75}$, $2^{102.22}$ and $2^{103.65}$ for
$d=2,3,4$, and the $256$-bit row at $2^{187.45}$.

The signing distribution used for the concrete estimates sets exactly one $Q_i$
to one. Every accepted signature then satisfies the degree-2 relation
\[
  1+\sum_{i=1}^rL_i(x)L_i'(x)=0,
\]
where the source's own discussion of transcript structure reaches for degree-4
relations such as $Q_iQ_j=0$. The polar form
$B_R=\sum_i(u_iv_i^{\mathsf T}+v_iu_i^{\mathsf T})$ has kernel
$\bigcap_i\ker L_i\cap\ker L_i'$ when the forms are independent, so interpolating
that one relation returns the common annihilator of the entire perturbation.
Restricted to that kernel every $LL'$ term vanishes identically and the target
returns from $d+r$ to $d$: at the $128$-bit rows the restored low-rank stage
costs about $2^{38.5}$ and $2^{35.0}$. Interpolation needs enough transcripts to
span the degree-$\le2$ monomials, $32897$ at $N=256$ and $115441$ at $N=480$.

\notclaimed{Any executed key recovery. These are statements inside the source's
own cost model and about its own signing distribution; no MinRank attack was
run at these parameters, and the restored low-rank figures are the cost of that
stage rather than of a completed recovery. The open bridge is to feed the
restricted instance through an HFE MinRank implementation. A signer that mixes
odd and even hypothesis weights destroys the constant relation, so the
exact-one tables and the randomised-weight proposal are not the same scheme.}

\subsection{An explicit normal basis splits along the CRT factors}

\finding{The constructive normal-basis generator of Proposition 6.11 has a
Galois orbit supported factorwise, so the stated degree-63 instance decomposes
into seven degree-nine instances at an unchanged noise rate.}

Bombar, Couvreur, and Debris-Alazard~\cite{bcd2022} build a normal-basis
generator inside the CRT product as $b=(a,0,\ldots,0)$ and take its complete
Galois orbit. Because the generator is supported on one component, so is every
element of the orbit. The global basis is therefore a disjoint union of local
normal bases instead of a genuinely global one.

The consequence for the noise is exact. Error sampled as
$e=\sum_{\sigma\in G}e_\sigma\sigma(b)$ with independent Bernoulli
coefficients projects onto a component by annihilating every basis element
supported elsewhere. The surviving error is a local Bernoulli error with the
\emph{same} parameter: there is no compensating increase. Measured over
$100000$ trials at source rate $1/4$, the seven per-factor rates all lay
within $0.0006$ of the source rate.

For the stated example, exhaustive trial division over $\mathbb F_2$ gives
\[
  X^{63}+X^{7}+1=\prod_{i=1}^{7}f_i(X),\qquad \deg f_i=9,
\]
with seven distinct irreducible factors and residual cofactor $1$. Hence, for
this basis,
\[
  \mathrm{NormalRingLPN}_{63}\longrightarrow 7\times\mathrm{RingLPN}_{9}.
\]
Each local secret has only $2^9=512$ candidates, so the complete CRT secret
follows from seven independent maximum-likelihood searches using
$7\cdot512=3584$ candidate tests per scan, against $2^{63}$ undecomposed.

Scope requires care here. The authors themselves warn that their constructive
basis can cancel noise under projection and suggest that the normal basis need
not be the one they construct. That warning is prior public information and is
not claimed here. What is added is the exact basis-support decomposition, the
observation that the projected rate is unchanged and not merely reduced,
the precise $63\to7\times9$ collapse, and the resulting candidate count. An
independently sampled random normal basis is untouched, and the
search-to-decision theorem is not contradicted: it can remain true while both
search and decision are easy under this particular basis.

\subsection{Semidirect group-ring LWE has two public maps}
\label{sec:group}

\finding{The augmentation defeats the full-ring decision claim, and explicit
rank-four maps defeat the selected quotient rings under their coefficient
Gaussian completion; the two maps are complementary, since the augmentation
does not descend to either quotient.}

Liu and Fu~\cite{lf2026} study LWE over full and quotient group rings for two
semidirect families, selecting quotients intended to remove the
one-dimensional representations that enable the simplest homomorphism attacks.

Their decision statement is ill-typed before any attack: the definition asks
for a probability distribution over error distributions, while the theorem
supplies only the set $\Psi_{\le\alpha}$ of elliptical Gaussians. The analysis
below fixes an explicit coefficient-basis Gaussian completion and attributes
no formal quotient statement to the source.

\paragraph{The full rings.}
Every integral group ring carries the augmentation
\[
  \varepsilon\Bigl(\sum_{g\in G}c_gg\Bigr)=\sum_{g\in G}c_g,
\]
a surjective ring homomorphism onto $\mathbb Z$. A genuine sample
$b=sa/q+e \bmod R$ projects to the one-dimensional sample $B=SA+E \bmod q$
with $S=\varepsilon(s)$, $A=\varepsilon(a)$, and $E=q\varepsilon(e)$, while a
uniform sample stays uniform. Since $\varepsilon$ sums $N=|G|$ independent
coefficients, $E$ has width at most $\alpha q\sqrt N$.

Take the sequence permitted by the source's own numerical condition:
\[
  N^{10}<q_N<2N^{10},
  \qquad \alpha_N=\frac{2N}{q_N},
  \qquad B_N=4N^{3/2}\log N,
\]
so that $\alpha_Nq_N=2N$. Given two samples, enumerate $S'\in\mathbb Z_{q_N}$
and accept when both residuals are at most $B_N$ in absolute value. The
genuine secret passes by Gaussian concentration, since the error width is at
most $2N^{3/2}$ and the ratio of $B_N/2$ to it is $\log N$. For the uniform
distribution, each candidate passes with probability at most
$(2B_N/q_N)^2$, so a union bound over $q_N$ candidates gives
\[
  q_N\Bigl(\frac{2B_N}{q_N}\Bigr)^{2}=\frac{4B_N^2}{q_N}
  =64\,N^{-7}\log^{2}N .
\]
Enumeration costs $q_N=N^{10+o(1)}$, which is polynomial in the encoded
instance size. The natural universal repair of the full-ring decision claim is
therefore false: there is an explicitly permitted coefficient-Gaussian
sequence for which the problem is polynomial-time distinguishable.

At the stated parameter sequence, the two-sample test accepted the genuine
distribution $40/40$ and the uniform distribution $0/40$ for Type I with $m=4$
and $m=6$ and Type II with $k=1$ and $k=2$. The one-sample control accepted
uniform $40/40$, confirming that the second sample is load-bearing.

\paragraph{The quotients.}
The augmentation does not reach them. Exact computation gives
$\varepsilon(N_k)=3^k$ for the Type-II norm quotient and
$\varepsilon(t^2+1)=2$ for Type I, so $\varepsilon$ survives only modulo $3^k$
and modulo $2$ respectively. The selected quotients require separate maps, and
they exist: for Type I with $n=4$, reduction of the exponent of $s$ modulo two
gives a rank-four quotient; for Type II, coordinate reduction modulo three
descends to a ring epimorphism $R_k\twoheadrightarrow R_1$ because
$N_k\mapsto3^{k-1}N_1$. Both targets satisfy $R_1/qR_1\cong M_2(\mathbb F_q)$
for the attack primes. Enumerating the rank-four quotient secret and testing
two projected residuals distinguishes in $N^{40+o(1)}$ operations with uniform
false acceptance $N^{-28+o(1)}$.

For Type I the scope of this quotient mechanism can be made exact. Since
$t^2=-1$ gives $t^{-1}=-t$, the defining relation $sts^{-1}=t^{-1}$ becomes
$st=-ts$; write $R^I_{m,q}$ for the resulting rank-$2m$ algebra modulo $q$, and
take $m$ even, $q$ odd, and $q\equiv1\pmod m$, which Dirichlet permits inside
the parameter sequence. For each $m$-th root of unity $z\in\mathbb F_q^\times$,
\[
  \rho_z(s)=\begin{pmatrix}z&0\\0&-z\end{pmatrix},
  \qquad
  \rho_z(t)=\begin{pmatrix}0&-1\\1&0\end{pmatrix}
\]
satisfy $\rho_z(s)^m=I$, $\rho_z(t)^2=-I$, and
$\rho_z(s)\rho_z(t)=-\rho_z(t)\rho_z(s)$, so $\rho_z$ is a homomorphism
$R^I_{m,q}\to M_2(\mathbb F_q)$. The representations for $z$ and $-z$ are
conjugate, and taking one representative from each pair $\{z,-z\}$ gives
\[
  R^I_{m,q}\;\cong\!\!\prod_{z\in\mu_m(\mathbb F_q)/\{\pm1\}}\!\! M_2(\mathbb F_q).
\]
Injectivity is direct: every element is uniquely $x=A(s)+B(s)t$ with
$\deg A,\deg B<m$, and
\[
  \rho_z(x)=\begin{pmatrix}A(z)&-B(z)\\B(-z)&A(-z)\end{pmatrix},
\]
so if all blocks vanish then $A$ and $B$ vanish at all $m$ roots of $X^m-1$ and
are zero. Source and target both have dimension $2m$ over $\mathbb F_q$, so the
map is an isomorphism. Verified exactly for $m=2,4,6,8,10,12,16,20$.

That decomposition turns the recorded limitation into a bound. For
$x=\sum_{i<m}(a_is^i+b_is^it)$ write
$\|x\|_{\mathrm{coef},\infty}=\max_i\{|a_i|,|b_i|\}$, and let
$\|\cdot\|_{\infty,q}$ be the largest absolute centred representative of a
matrix entry. For the block $z=1$ every entry is a signed sum of $m$
coefficients, so
\[
  \|\rho_1(x)\|_{\infty,q}\le m\|x\|_{\mathrm{coef},\infty}.
\]
Every other block distorts far more. Let $z\notin\{\pm1\}$ have
multiplicative order $d$. Then
\[
  \max_{0\le j<d}\bigl|[z^j]_q\bigr|\;\ge\;\sqrt{\tfrac q2},
\]
where $[\cdot]_q$ is the centred representative. Otherwise every element of
$\langle z\rangle$ would have centred representative below $\sqrt{q/2}$ in
absolute value, the product of any two would stay below $q/2$, and
multiplication in the subgroup would agree with integer multiplication without
wraparound. The centred representatives would then form a finite multiplicative
subgroup of the nonzero integers, which is contained in $\{\pm1\}$, contradicting
$z\notin\{\pm1\}$. Choosing $j$ attaining the bound, the basis element $s^j$ has
$\|s^j\|_{\mathrm{coef},\infty}=1$ while
$\rho_z(s^j)=\operatorname{diag}(z^j,(-z)^j)$ has
$\|\rho_z(s^j)\|_{\infty,q}\ge\sqrt{q/2}$.

Because the split algebra is a direct product of simple matrix rings, every
two-sided quotient retains a subset of the blocks. So when $m<\sqrt{q/2}$, the
rank-four $z=1$ image is maximal among quotient images whose
coefficient-to-centred distortion stays below $\sqrt{q/2}$: any strictly larger
quotient contains a nontrivial block and loses that property. This is the
maximality the bounded-residual attack needs, and it is not the claim that
every statistical attack on the other blocks is impossible. Measured at $m=20$
and $q=100361$ over $400$ errors with coefficients in $\{-1,0,1\}$, the largest
centred entry was $12$ for the $z=1$ block against $|G|=80$, and between $48253$
and $50167$ for each of the nine others against $q/2=50180$, consistent with the
two bounds. A repair must remove that single trivial-character direction.

Both results recover only the image of the secret under the respective map,
and neither instantiates an attack on a fully specified encryption scheme. The
source supplies no Type-II parameter generator or message encoding. Standard
RLWE is not affected. Removing one-dimensional representations is, on this
evidence, insufficient: what must be excluded is every efficiently computable
image in which the secret can be enumerated and the projected error remains
distinguishable.

\subsection{An unrestricted hint definition is not generically hard}

\finding{Read without the restrictions its hardness theorem imposes, the
Hint-MLWE definition admits a parameter choice that reveals the randomness
exactly.}

The parameterized definition of Kim, Lee, Seo, and Song~\cite{klss2023} admits
$\ell=1$, $\gamma=1$, and a point-mass hint distribution. That choice makes the
hint reveal the randomness exactly and gives a distinguisher of advantage
$1-q^{-nm}$.

This does \emph{not} contradict the source's Gaussian hardness theorem, which
excludes that parameter choice. It bounds how the assumption may be cited: a
construction may invoke only the restricted family the theorem covers, not
``Hint-MLWE'' generically.

\section{Where a proof covers less than the construction needs}
\label{sec:scope}

The three targets in this section behave differently from everything above.
Nothing in them is wrong, and nothing in them is attacked.

In each case the pattern is the same. A paper proves that some problem is hard
under a stated set of conditions. A construction is then built on the same
assumption, but relies on it holding in a wider range than the proof covers.
Nobody closes that gap, and usually nobody claims to: the authors are often
explicit about what their theorem does and does not reach. The difficulty
arrives later, when the assumption is cited by name, because the name carries
the wide reading while the proof only supports the narrow one.

So what these three entries support is a statement about how an assumption may
be cited, not a statement about whether a scheme is secure. Presenting them
alongside the findings above would make this record claim more than its
evidence does, and calling them breaks would be simply false. They are here so
that the next reader does not have to rediscover the difference, and separate
so that nobody mistakes the difference for an attack.

\paragraph{Sparse LWE.}
Jain, Lin, and Saha~\cite{jls2024} derive hardness from ordinary LWE in
dimension about the sparsity $k$, so the reduction reads
$\mathrm{LWE}_k\lesssim\mathrm{SparseLWE}_{n,k}$ and not
$\mathrm{LWE}_n\lesssim\mathrm{SparseLWE}_{n,k}$. Raising $n$ at fixed $k$
does not inherit the hardness normally associated with dimension $n$, and a
blockwise search at constant $k$ is polynomial. The authors explicitly study
constant and polylogarithmic sparsity and present this picture themselves, so
this is not recorded as a finding against them. The defensible statement is
about citation, not correctness: security is governed by $k$.

\paragraph{Tensor LWE.}
Agrawal, Rossi, Yadav, and Yamada~\cite{aryy2023} reduce standard LWE to
Tensor LWE for $x_i=0$ and for one common fixed $x$, and explain why the
argument does not extend to several distinct $x_i$ sharing one secret. The ABE
construction nevertheless invokes the full varying-$x_i$ assumption, and its
theorem is additionally very selective. A cancellation direction using public
$\lambda_i$ with $\sum_i\lambda_ir_i=0$ and
$\sum_i\lambda_ir_i[j]x_i[h]=0$ would cancel the secret-dependent terms, but
requires a relation short enough that $\lvert\sum_i\lambda_ie_i\rvert\ll q$,
which is itself an SIS-type problem. No complexity bridge was found.

\paragraph{Succinct LWE.}
Wee~\cite{wee2024} gives an ordinary-LWE argument for the wide regime
$\hat m=\ell m$, while the construction that provides the headline compression
uses $\hat m=m$, supported by adding public-coin evasive LWE. The succinctness
gain lives precisely in the part of the assumption that ordinary LWE does not
establish. Turning the public trapdoor into a short dual relation requires a
short vector in $\ker T_W\cap\ker(I_\ell\otimes G)$, again an SIS problem, and
the source itself discusses the same danger in the extreme narrow regime.

None of the three is a reason to distrust the paper it comes from. Each is a
reason to check, before relying on one of these assumptions, which version of
it the cited theorem actually establishes.

\section{Reproducibility}

Every check below uses only the Python standard library, runs in seconds, and
contains at least one discriminating control. A control fixes its expected
outcome before running and shows that the apparatus separates the hypothesised
phenomenon from a neighbouring case; whether it is required to pass or to fail
is secondary. Table~\ref{tab:repro} lists what each check establishes and what
its control is.

\begin{table}[!t]
\centering
\footnotesize
\begin{tabular}{@{}L{3.5cm}L{4.5cm}L{4.2cm}@{}}
\toprule
Check & Establishes & Control, with its required outcome \\
\midrule
Ring-LWR opening &
Universal double opening, the witness-free prover, the false-statement
density, and the nonunit challenge difference &
Honest opening with multiplier one still verifies \\
\addlinespace
Spinel encoding &
The two collisions, the one-byte second preimage, the $244/255$ count, the
exhaustive two-byte enumeration, and the $\mathrm{SL}_4$ digest space &
The fixed-width six-trit repair must separate every colliding pair, and the
per-byte codeword map must be injective \\
\addlinespace
Ring-LWR extraction density &
The unit criterion, and that extraction is undefined for
$\tfrac12-1/(2^{257}-2)$ of challenge pairs &
Brute-force inversion on two small rings must agree with the parity
criterion \\
\addlinespace
CRT-RLWE reaction &
The source's scalar example, and full noise plus key recovery on a ring at
$n=64$ from the accept bit alone &
Honest decryption must reproduce the plaintext, and a single-limb shift must
never preserve it \\
\addlinespace
MinRank accounting &
The projected dimension, the optimised lower tail, and the restricted
low-rank stage, all under the source's own expression &
The per-perturbation rank law must reproduce the source's average of one \\
\addlinespace
Middle-product boundary attack &
All $2313$ masks and both plaintexts at the full published dimensions, and
the midpoint-lift bound &
The reconciliation bound is checked, not assumed \\
\addlinespace
Middle-product range &
Recovery at $t\le24$ and collapse at $t=25$, three key generations per row &
$t=25$ must exhaust the node budget \\
\addlinespace
Receipt simulator &
The exact distance, the $1/15$ bound, the binomial tails, and sufficiency of
(zero count, $\#\pm2$) &
A simulator drawn from the correct convolution law must show no advantage \\
\addlinespace
Hollow-LWE bridge &
A valid old key recovered, and the challenge decrypted, $12/12$ at three
dimensions &
A permutation altered by one transposition must be rejected \\
\addlinespace
Hollow-LWE parameter repair &
Reproduction of the source's own two tables, and the cost of repairing both
defects &
The reproduction must match every published row before anything is built
on it \\
\addlinespace
Constructive basis collapse &
The exact seven-way degree-nine splitting and the unchanged projected noise
rate &
A perturbed factor must break the product, and a neighbouring trinomial must
not split the same way \\
\addlinespace
Group-ring augmentation &
Multiplicativity, the union bound, separation at the stated parameters, and
the count of sign characters &
One sample instead of two must fail to separate \\
\addlinespace
Type-I block maximality &
The complete splitting, the $m$ distortion bound for the trivial-character
block, and the $\sqrt{q/2}$ lower bound for every nontrivial block &
The $z=1$ block must stay below the $m$ bound; every nontrivial block must
contain a basis witness of centred norm at least $\sqrt{q/2}$ \\
\bottomrule
\end{tabular}
\caption{Each executable check, what it establishes, and its control.}
\label{tab:repro}
\end{table}

Two limitations apply. The Hollow-LWE runs assume the
permutation and validate only the post-permutation algebra at reduced
dimensions; they are not a benchmark of the full-size recovery, which is other
authors' work. The corrected Hollow-LWE parameter rows are a lower bound on
the repair, because only the estimator-free direction is used.

\deleted{Source provenance was checked against the primary documents. The
Hollow-LWE row values, the split of the hull dimension, and the verbatim
estimator call are confirmed against that paper's Table 1, Table 2, and
appended script; the weak-regime condition and the reported recovery are
confirmed against the abstract of the code-equivalence paper. Novelty
statements throughout are bounded by the prior-art passes performed.}\changegap
\added{For Hollow-LWE, Tables 1 and 2 of~\cite{abl2025} supply the row values
and hull split, and the appended script fixes the estimator call. The
permutation-recovery premise comes from~\cite{bms2025};
Section~\ref{sec:hollow} supplies the bridge to old-key and plaintext
recovery.}
\deleted{The Spinel source is pinned to arXiv version 2, and the deterministic
checker reproduces the encoding witnesses and the fixed-width repair control.
No implementation-level forgery or disclosure record is claimed here.}

\section{What a security argument should enumerate}

\deleted{The case studies above demonstrate a single methodological claim:
autonomous discovery can reach reproducible candidate insights before human
review, provided that promotion remains evidence-gated.}\changegap
\added{Together, the case studies show that autonomous search can discover
reproducible failures at representation and distribution boundaries before
human review.}

\deleted{None of the defects above required a new lattice technique.}
\deleted{None of the defects above required a new attack on the advertised hard
problem.}\changegap\added{Each defect arises before the advertised hard problem
becomes relevant.} They required
reading what each construction publishes, asking which efficiently computable
maps survive, and checking whether each asserted distribution is the one that
actually occurs. \deleted{Machine assistance makes it cheap to ask those two
questions across many papers at once; it contributes nothing to answering
them.} \deleted{Machine assistance makes it cheap to pose those two questions
across many papers and can autonomously propose and execute candidate answers.
The evidence gate admits reproducible insights; subsequent human review
confirms their source fidelity, consequence, scope, and provenance.} Every
result here survived because a short deterministic program with a control
confirmed it, and several plausible candidates did not survive that step.

The recommendation is narrow. When a construction introduces a new
algebraic setting \added{or representation layer}, the security argument should enumerate the efficiently
computable images of its published objects and state, for each, why the secret
is not enumerable there and why the projected error is not distinguishable.
\added{An input encoding must additionally be shown injective or uniquely
decodable before security properties are transferred through it.}
Removing one class of maps, as the semidirect quotients remove one-dimensional
representations, is not sufficient. Separately, when a verifier accepts a set
of values, binding and soundness must be proved for that exact set and not
for the honest element of it, and when a parameter script certifies a security
level, it must be invoked at the dimension the reduction actually supplies.

\bmhead{AI disclaimer}

In this paper about AI use, AI tools were used

\appendix

\section{\texorpdfstring{\snail[1.4]\;}{}The grind calculus}
\label{app:calculus}

Section~\ref{sec:grind} states the method in prose. The calculus below states
it in notation, fixing which discipline each step enforces and where a
proposal becomes evidence. \deleted{Admission produces a reproducible candidate
insight; it does not by itself produce a cryptanalytic claim. Promotion to a
break, defect, gap, or attribution additionally requires the human audit
defined below.}\changegap\added{Admission assigns the intermediate judgement
$\mathsf{ev}$; the classification rules below assign a final result type after
human audit.} The two axioms carry the weight.

\subsection{The arena}

\begin{definition}[Arena]
Let $\Pi$ be a set of source documents. Each $\pi\in\Pi$ carries a
\emph{version stamp} $\nu(\pi)$, meaning a revision date together with an
archive identifier. The predicate $\mathrm{Ver}(\pi)$ holds when
\deleted{the revision}\changegap\added{the source version}
under attack is recorded, which is needed only because a preprint may be revised
after the attack and a reader needs to know which text was read. Write
\[
  \mathcal J(\pi)\;=\;\{\,\text{joins of }\pi\,\},
\]
the finite set of points at which the argument of $\pi$ crosses a boundary: a
change of algebra, \added{a representation or encoding change,} a distribution carried through a map, a rounding or CRT
conversion, a quotient, an exceptional value, a quantifier widened from one
object to many, or a parameter substituted into a script instead of into its
theorem. The \emph{hypothesis space} $\mathcal H$ consists of triples
\[
  \eta=(\gamma_\eta,\;\rho_\eta,\;\omega_\eta),
\]
a claim $\gamma_\eta$ of $\pi$, a mathematical or state relation $\rho_\eta$
alleged to violate it, and a predicted observable $\omega_\eta$ fixed
\emph{before} anything is run.
\end{definition}

\begin{definition}[Search directions]
$\mathfrak D$ is not turned loose on $\mathcal J(\pi)$ unaimed. At each join
$j$ it is put to two questions, written $\mathsf A$ and $\mathsf B$:
\[
\begin{aligned}
  \mathsf A(j)&:\ \text{which efficiently computable }\phi\text{ survives }j
                  \text{ with useful signal},\\
  \mathsf B(j)&:\ \text{which law actually results from }j.
\end{aligned}
\]
These generate $\mathcal H_{\mathsf A}(j)$ and $\mathcal H_{\mathsf B}(j)$, and
$\mathcal H=\bigcup_{j}\bigl(\mathcal H_{\mathsf A}(j)\cup
\mathcal H_{\mathsf B}(j)\bigr)$. Section~\ref{sec:axes} states the same two
questions in prose. A hypothesis in $\mathcal H_{\mathsf A}(j)\cap
\mathcal H_{\mathsf B}(j)$ answers both at one join, and those are the ones
that reach the highest levels of $\Lambda$.
\end{definition}

\subsection{The agents}

Five agents act on the arena. They are written in fraktur because they are not
functions in any respectable sense; $\mathfrak D$ in particular is a sampler
whose distribution nobody can write down, and which is steerable only through
the join it is given and the two questions asked there.

\begin{table}[!ht]
\centering
\small
\begin{tabular}{@{}llL{7.1cm}@{}}
\toprule
Agent & Type & Duty \\
\midrule
$\mathfrak D$ \emph{(the Dreamer)} & $\mathcal J(\pi)\to 2^{\mathcal H}$ &
Proposes many $\eta$ per join. Optimistic, fluent, usually wrong \\
$\mathfrak X$ \emph{(the Executor)} & $\mathcal H\to\{\top,\bot,\mathord{?}\}$ &
Runs the exact check and its control. Returns $\mathord{?}$ for timeout,
cancellation, or tool failure \\
$\mathfrak E$ \emph{(the Executioner)} & $\mathcal H\to\{\bot\}$ &
Retires $\eta$ with the discriminating evidence attached. Never deletes \\
$\mathfrak P$ \emph{(the Pusher)} & $\mathcal H\to\mathcal H$ &
Sends a survivor with limitation $\ell$ to $\partial_\ell\eta$, the hypothesis
that $\ell$ is removable \\
$\mathfrak S$ \emph{(the Scribe)} & $\mathcal H\to\Xi$ &
Commits verdicts to the evidence store $\Xi$, including the rejected ones \\
\bottomrule
\end{tabular}
\caption{The five agents of the autonomous loop, their types, and what each is
responsible for. Only $\mathfrak X$ decides anything; the rest propose, retire,
re-aim, or record.}
\label{tab:agents}
\end{table}

\deleted{These five agents form the autonomous panel. A human operator stands
outside it and controls promotion. The operator audits a candidate handoff
for source fidelity, cryptanalytic consequence, scope, provenance, and
disclosure before any final result type is assigned.}\changegap
\added{The five agents execute the autonomous loop. Final classification is
performed by a human auditor through the predicate $\mathrm{Aud}$ defined
below.}

\begin{axiom}[Confident Nonsense]
\label{ax:nonsense}
Let $\kappa(\eta)$ be the confidence $\mathfrak D$ attaches to $\eta$, however
expressed. Then $\kappa$ is \emph{inadmissible}: no rule below may mention it,
and no ranking, ordering, or triage used to reach a judgement may be derived
from it.
\end{axiom}

Axiom~\ref{ax:nonsense} is the whole reason the calculus exists, and it is
stated as an exclusion on purpose. Calibrating $\kappa$ against the truth would
require an empirical claim about a particular generator at a particular time,
which would date immediately and would have to be redone for the next one.
Excluding $\kappa$ requires nothing of the generator at all. A proposal that is
wrong at random is harmless; a proposal that is wrong \emph{fluently} will
otherwise supply its own peer review.

\deleted{Nothing forbids $\kappa$ outside the rules. Ordering a queue by it is
free, because the admission boundary is unchanged by the order in which
candidates arrive at it.}
\deleted{Generator confidence may affect scheduling, but never evidentiary
admission or classification. Under a finite budget, the scheduling rule and
every candidate not reached within the budget must be recorded, because
scheduling changes coverage even though it cannot change the type of admitted
evidence.}\changegap
\added{Generator confidence may order the queue but does not enter admission
or classification. Under a finite budget, record the scheduling rule and every
generated candidate left unevaluated, since scheduling determines coverage.}

\subsection{Predicates}

The rules quote the conditions in Table~\ref{tab:predicates}. Each is a
property of a single hypothesis $\eta\in\mathcal H$, except $\mathrm{Brg}$,
which relates a hypothesis to a security property $P$. They divide into three
jobs. $\mathrm{Ex}$, $\mathrm{Wit}$, $\mathrm{Ctl}$ and $\mathrm{Par}$
decide whether $\eta$ becomes evidence at all. $\mathrm{Prm}$ and
$\mathrm{Brg}$ decide whether evidence reaches a security property.
$\mathrm{Clm}$, $\mathrm{Nar}$ and $\mathrm{Att}$ decide which type it
receives, and $\mathrm{Aud}$ is the human step that every classification rule
requires.

Two symbols need separating, because they look alike and do opposite jobs.
Throughout, $\eta$ is the hypothesis under test, and $\bar\eta$ is its
\emph{control}: a second hypothesis, built to differ from $\eta$ in the one
respect the analysis claims is doing the work, and run through the same
apparatus. The pair exists so that the apparatus can be shown to tell them
apart.

\begin{table}[!ht]
\centering
\small
\begin{tabular}{@{}lL{9.9cm}@{}}
\toprule
Predicate & Holds when \\
\midrule
$\mathrm{Ex}(\eta)$ &
Every decision on the path to $\omega_\eta$ is taken in exact rational or
integer arithmetic. Floating point may inform, never decide \\
$\mathrm{Wit}(\eta)$ &
A minimal deterministic witness reproduces $\omega_\eta$ \\
$\mathrm{Ctl}(\eta)$ &
A control $\bar\eta$ exists with its outcome $\mathfrak X(\bar\eta)$ fixed
before it runs, separating $\eta$ from a neighbouring case; the check is void
if $\bar\eta$ returns anything else \\
$\mathrm{Par}(\eta)$ &
The witness was run at the parameter sequence the analysis specifies, not at a
convenient one \\
$\mathrm{Prm}(\eta)$ &
The trigger uses only values the target itself permits \\
$\mathrm{Brg}(\eta,P)$ &
A validated bridge carries $\rho_\eta$ to an observable violation of $P$ \\
$\mathrm{Clm}(\eta)$ &
$\rho_\eta$ falsifies a claim the source states, or leaves it unsupported \\
$\mathrm{Nar}(\eta)$ &
What the source proves is materially narrower than what a consumer of it
relies on, with no stated claim falsified \\
$\mathrm{Att}(\eta)$ &
The relation is due to other authors \\
\added{$\mathrm{Aud}(\eta)$} &
\deleted{A human operator has audited a handoff bundle containing the exact
claim and source version, witness, precommitted control, code, execution
environment, run record, limitations, and preliminary result type, and has
checked source fidelity, consequence, scope, provenance, and disclosure}
\changegap\added{The handoff contains the exact claim and source version,
witness, control, code, environment, run record, limitations, and proposed
type, and a human auditor has verified the source, consequence, scope, and
attribution} \\
\bottomrule
\end{tabular}
\caption{The predicates the rules quote. $\eta$ is the hypothesis under test,
$\bar\eta$ its control, $\omega_\eta$ its predicted observable, $\rho_\eta$
the relation it alleges, and $P$ a security property.}
\label{tab:predicates}
\end{table}

$\mathrm{Ctl}$ is a demand for \emph{discrimination}, not for a particular
verdict. A negative control that must fail shows the apparatus can register a
negative; a positive control that must pass shows it reproduces a case already
known to hold. Both appear in Table~\ref{tab:repro}. What is excluded is a
check whose outcome would have been accepted either way.

\subsection{Evidence types, and the axiom that keeps them apart}

\begin{definition}[Evidence types]
The type of a survivor is drawn from
\[
  \mathbb T=\{\;\mathsf{brk}(P),\;\mathsf{def},\;\mathsf{gap},\;
  \mathsf{kno}\;\},
\]
a break of property $P$, an assumption or certification defect, a coverage
gap, and a known correction attributable to other authors.
\end{definition}

\begin{axiom}[No Coercion]
\label{ax:nocoerce}
Regard $\mathbb T$ as a category whose objects are the four types. Its only
morphisms are the identities. In particular there is no arrow
$\mathsf{def}\to\mathsf{brk}$, no arrow $\mathsf{gap}\to\mathsf{def}$, and no
composite producing one. A promotion may be obtained only by deriving the
target judgement from scratch through Rule~\ref{r:brk}.
\end{axiom}

\begin{grindrule}[Counting]
Let $\#$ be the tally measure on $\Xi$. Then $\#\,\mathsf{kno}=0$: results
attributable to others are cited at the point of use and contribute nothing to
the count, however load-bearing they are. The Schur-product permutation
recovery that Section~\ref{sec:hollow} builds on is the case that matters here.
\end{grindrule}

\subsection{Admission and promotion}

\begin{grindrule}[Admission]
\[
  \frac{\mathrm{Ver}(\pi)\quad \mathrm{Ex}(\eta)\quad \mathrm{Wit}(\eta)
        \quad \mathrm{Ctl}(\eta)\quad \mathrm{Par}(\eta)}
       {\Xi\vdash\eta:\mathsf{ev}}\;(\textsc{Adm})
\]
\deleted{Nothing is evidence before this line. A proposal of $\mathfrak D$,
however detailed, is at most a request for a check.}
\deleted{Nothing is computational evidence before this line. Its conclusion is
a reproducible candidate insight, not a break or other cryptanalytic claim. A
proposal of $\mathfrak D$, however detailed, is at most a request for a check.}
\changegap\added{The conclusion $\Xi\vdash\eta:\mathsf{ev}$ records a
reproduced candidate; Rule~\ref{r:brk} assigns its cryptanalytic type.}
\end{grindrule}

\begin{grindrule}[Classification]
\label{r:brk}
\[
  \frac{\added{\mathrm{Aud}(\eta)\quad}\mathrm{Att}(\eta)}
       {\Xi\vdash\eta:\mathsf{kno}}\;(\textsc{Att})
  \qquad
  \frac{\Xi\vdash\eta:\mathsf{ev}\quad \neg\,\mathrm{Att}(\eta)
        \quad \added{\mathrm{Aud}(\eta)}
        \quad \mathrm{Prm}(\eta)\quad \mathrm{Brg}(\eta,P)}
       {\Xi\vdash\eta:\mathsf{brk}(P)}\;(\textsc{Brk})
\]
\[
  \frac{\Xi\vdash\eta:\mathsf{ev}\quad \mathrm{Clm}(\eta)
        \quad \added{\mathrm{Aud}(\eta)}
        \quad \neg\,\mathrm{Brg}(\eta,P)\ \text{for every }P}
       {\Xi\vdash\eta:\mathsf{def}}\;(\textsc{Def})
\]
\[
  \frac{\Xi\vdash\eta:\mathsf{ev}\quad \mathrm{Nar}(\eta)
        \quad \added{\mathrm{Aud}(\eta)}
        \quad \neg\,\mathrm{Clm}(\eta)}
       {\Xi\vdash\eta:\mathsf{gap}}\;(\textsc{Gap})
\]
\end{grindrule}

Absence of a bridge does not on its own make a defect, which is why
$\textsc{Def}$ carries $\mathrm{Clm}$. A validated relation that falsifies
nothing the source states and narrows nothing a consumer relies on receives no
type at all: it is an observation, it stays in $\Xi$, and it is not a finding.
$\textsc{Att}$ takes precedence over the other three, so a result belonging to
other authors cannot be relabelled by finding a bridge for it.

\deleted{The preliminary type carried by the autonomous handoff is advisory and
has no force in these rules. In particular, it cannot replace
$\mathrm{Aud}(\eta)$ or coerce the type selected after audit.}\changegap
\added{The proposed type in the handoff is not a premise of
Rule~\ref{r:brk}.}

The four rules are deliberately not exhaustive. Every type on the left of the
turnstile has a premise that must be established by argument, and a hypothesis
meeting none of them is left unclassified rather than pushed into the nearest
available box.

\begin{grindrule}[Unresolved]
\label{r:unr}
\[
  \frac{\mathfrak X(\eta)=\mathord{?}}
       {\Xi\vdash\eta:\text{unresolved}}\;(\textsc{Unr})
\]
A timeout, an out-of-memory kill, a cancellation, or a tool failure is an
unresolved evidence path. It is never a refutation of $\eta$, and the
temptation to read it as one is the reason the rule is written down.
\end{grindrule}

\begin{grindrule}[Push]
\[
  \frac{\Xi\vdash\eta:\mathsf{brk}(P)\ \text{with stated limitation }\ell}
       {\partial_\ell\eta\in\mathcal H}\;(\textsc{Push})
\]
Every limitation re-enters the hypothesis space as a target.
\deleted{The autonomous panel need not wait for final promotion before pushing a
reproducible candidate:}\changegap
\added{An admitted candidate may be refined before audit:}
\added{
\[
  \frac{\Xi\vdash\eta:\mathsf{ev}\ \text{with stated limitation }\ell}
       {\partial_\ell\eta\in\mathcal H}\;(\textsc{AutoPush}).
\]}
\deleted{This rule changes the next research question, not the evidentiary type
of $\eta$.}\changegap
\added{Thus autonomous search can attack a candidate's stated limitations
without assigning it a final type.}
\end{grindrule}

\begin{lemma}[Two Exits]
\label{l:twoexits}
A \emph{completed} push has two productive exits. Either $\partial_\ell\eta$ is
admitted, and the result strengthens; or $\partial_\ell\eta$ is refuted by a
proof that $\ell$ cannot be removed, and $\ell$ upgrades from an apology to a
maximality theorem. A push need not complete: by Rule~\ref{r:unr} it may return
$\mathord{?}$ and stand unresolved, with neither a stronger attack nor an
impossibility proof in hand. That is the ordinary outcome and is recorded as
such.
\end{lemma}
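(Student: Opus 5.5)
The proof is a case analysis on the verdict that the Executor $\mathfrak X$ returns on the pushed hypothesis $\partial_\ell\eta$. It uses only the arena and the rules already stated. By Table~\ref{tab:agents}, $\mathfrak X$ has type $\mathcal H\to\{\top,\bot,\mathord{?}\}$, so every evaluation of $\partial_\ell\eta$ ends in exactly one of three outcomes. I would define a push as \emph{completed} when the outcome is $\top$ or $\bot$, and then handle the three values separately.

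In the first case, $\mathfrak X(\partial_\ell\eta)=\top$ with witness and control, and the check satisfies $\mathrm{Ex}$, $\mathrm{Wit}$, $\mathrm{Ctl}$ and $\mathrm{Par}$. Rule \textsc{Adm} then gives $\Xi\vdash\partial_\ell\eta:\mathsf{ev}$. By the definition of $\partial_\ell$ in the Pusher's duty, $\partial_\ell\eta$ asserts $\rho_\eta$ with the limitation $\ell$ removed, so the admitted statement implies the original and strictly widens its scope. That is the first exit. I would note that any later type $\mathsf{brk}(P)$ must be derived afresh through Rule~\ref{r:brk}, with $\mathrm{Aud}$, because Axiom~\ref{ax:nocoerce} forbids inheriting it from $\eta$. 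The strengthening is therefore at the level $\mathsf{ev}$ and is upgraded only by a fresh derivation.

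In the second case, $\mathfrak X(\partial_\ell\eta)=\bot$. The Executioner $\mathfrak E$ retires the hypothesis with its discriminating evidence attached, and $\mathfrak S$ records it in $\Xi$. The claim to establish is that this refutation is a statement that $\ell$ cannot be removed. Here I would require, as part of what ``refuted'' means for a completed push, that $\bot$ is backed by an exact proof (the $\mathrm{Ex}$ discipline) rather than by a failed search. Its content is the negation of ``$\ell$ is removable'', which is precisely a maximality statement for $\eta$ under $\ell$. The paper's instances illustrate this: the middle-product $t^*$ threshold and the Type-I block maximality bound.

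In the third case, $\mathfrak X(\partial_\ell\eta)=\mathord{?}$. Rule~\ref{r:unr} assigns the status unresolved and explicitly forbids reading it as a refutation, so neither exit is obtained. This case is recorded and excluded from ``completed''.

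The main obstacle is the second case. A bare $\bot$ verdict from an exact check with a control shows only that one witness of removability failed; it does not by itself give impossibility. I would make the lemma true by definition: a push counts as completed-with-refutation only when the retirement evidence is a proof of the universal negation. Any $\bot$ weaker than that would be classified with the unresolved outcomes, so the dichotomy holds as stated.
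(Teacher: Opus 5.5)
Your proposal is correct and is essentially the paper's own reasoning. The paper gives no separate proof: the lemma holds by the three-valued codomain of $\mathfrak X$, by Rule~\ref{r:unr}, and because ``completed'' and ``refuted by a proof that $\ell$ cannot be removed'' are written into the statement, which is exactly your definitional move; the paper cites the Type-I block maximality of Section~\ref{sec:group} as the instance of the second exit.

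Two precisions. A bare $\bot$ is retired by $\mathfrak E$ rather than receiving the Rule~\ref{r:unr} judgement, which requires $\mathfrak X=\mathord{?}$, so that case is better described as the grind normal form does: a limitation ``left unresolved with the attempt recorded.'' Your $\top$ branch also needs the same definitional restriction, since admission requires $\mathrm{Ver}$ and $\mathrm{Par}$ as well; a run at a convenient parameter is unverified, as in the group-ring episode of Section~\ref{sec:grind}.
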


Lemma~\ref{l:twoexits} is why the push is worth running even when it is
expected to fail. The Type-I block maximality statement in
Section~\ref{sec:group} is
precisely a second exit: an attempt to widen a quotient attack that instead
proved the quotient could not be widened.

\subsection{The grind operator}

\begin{definition}[Grind]
Let $\Sigma_n\subseteq\mathcal H$ be the survivors after round $n$ and
$\Xi_n$ the evidence store. The grind operator is the composite
\[
  \Gamma \;=\; \mathfrak S\circ\mathfrak P\circ
               (\mathfrak X\sqcup\mathfrak E)\circ\mathfrak D,
  \qquad
  (\Sigma_{n+1},\Xi_{n+1})=\Gamma(\Sigma_n,\Xi_n).
\]
A run reaches \emph{grind normal form} at the least $n$ with
$\Sigma_{n+1}=\Sigma_n$: every surviving limitation has been attacked and has
moved, been proved immovable, or been left unresolved with the attempt
recorded. Normal form is a stopping condition, not a completeness claim.
\deleted{The operator $\Gamma$ describes the autonomous search only. It may
produce and recursively push candidate insights, but promotion under the
classification rules remains external to $\Gamma$ because it requires
$\mathrm{Aud}$.}\changegap
\added{Thus $\Gamma$ governs autonomous search; $\mathrm{Aud}$ and
Rule~\ref{r:brk} act on its output.}
\end{definition}

What survived this run is countable: eleven results, three coverage gaps, and
one rejected strengthening retained for its discriminating evidence. What was
proposed and failed was not counted, because the generator was never
instrumented to record it.

That asymmetry is the economics of the method rather than a gap in its
reporting. The survival rate could be made arbitrarily small by proposing more,
and the method would be no worse for it, because the cost of a failed
hypothesis is bounded by the cost of one exact check while the value of a
surviving one is not bounded at all. What matters is the size of
$\Sigma_\infty$, not the fraction of proposals reaching it.

\subsection{\texorpdfstring{\deleted{What this does not formalise}\changegap
\added{Scope of the calculus}}{Scope of the calculus}}

The calculus fixes two layers: where hypotheses come from, through the joins
and the two search directions, and \deleted{when one of them becomes evidence
and then a break}\deleted{when one becomes reproducible candidate evidence; the
human-audit predicate separately controls promotion}\changegap
\added{when a hypothesis is admitted as computational evidence and how an
audited hypothesis is classified}. It leaves a third layer
alone. Which joins in a given construction repay
attention, and in what order, is not derivable from anything above, and a run
that picks badly returns nothing while satisfying every rule here. $\mathfrak D$
supplies volume and $\mathsf A$ and $\mathsf B$ supply aim, and neither
supplies judgement about where to point them first.

\subsection{\texorpdfstring{\deleted{The same two diagrams, in the same overly
formal spirit}\changegap\added{Formal workflow and escalation ladder}}{Formal
workflow and escalation ladder}}

The two diagrams of Section~\ref{sec:grind} are repeated here with every arrow
named. Figure~\ref{fig:formalloop} is the loop of Figure~\ref{fig:loop} with
the agents, the admission premises, and the three verdicts labelled, so that
the door into evidence and the door into a classification appear as the
separate steps they are. Figure~\ref{fig:formalladder} is the ladder of
Figure~\ref{fig:ladder} as a chain in the escalation poset, carrying the
fibres that the earlier figure deliberately leaves out: there the rungs matter
and here their contents do.

\begin{figure}[!ht]
\centering
\resizebox{\linewidth}{!}{%
\begin{tikzpicture}[
  font=\small,
  st/.style={draw, rounded corners=2pt, inner sep=4.5pt, minimum height=7.5mm,
             minimum width=13mm},
  sink/.style={st, densely dotted},
  lbl/.style={font=\scriptsize, inner sep=2pt},
  >={Stealth[round,length=5pt]},
  every path/.style={thick}
]
\node[st] (J) at (0,0) {$\mathcal J(\pi)$};
\node[st] (H) at (3.1,0) {$\mathcal H$};
\node[st] (X) at (8.5,0) {$\mathfrak X$};
\node[st] (S) at (13.2,2.15) {$\Sigma$};
\node[st] (B) at (13.2,0) {$\bot$};
\node[sink] (U) at (8.5,-1.9) {unresolved};
\node[st] (E) at (16.8,2.15) {$\Xi$};

\draw[->] (J) -- node[lbl, above] {$\mathfrak D$} (H);
\draw[->] (H) -- node[lbl, above]
                 {\deleted{\textsc{Adm}}\changegap\added{$\mathfrak X$}}
                 node[lbl, below]
                 {\deleted{$\mathrm{Ex}\wedge\mathrm{Wit}\wedge\mathrm{Ctl}
                   \wedge\mathrm{Par}$}} (X);
\draw[->] (X) -- node[lbl, sloped, above, align=center]
                 {$\top$\added{$\,;\,\textsc{Adm}$}\\[1pt]
                  \added{\footnotesize$\mathrm{Ex}\wedge\mathrm{Wit}
                  \wedge\mathrm{Ctl}\wedge\mathrm{Par}$}} (S);
\draw[->] (X) -- node[lbl, sloped, above] {$\bot$}
                 node[lbl, sloped, below] {$\mathfrak E$} (B);
\draw[->, densely dotted] (X) -- node[lbl, right] {$\mathord{?}$} (U);
\draw[->] (S) -- node[lbl, above]
  {\added{$\mathrm{Aud}\,;$ }$\mathfrak S$} (E);

\draw[->] (S) to[out=125, in=55, looseness=0.9]
  node[lbl, pos=0.5, above=2.6mm] {$\mathfrak P:\eta\mapsto\partial_\ell\eta$} (H);
\draw[->, densely dashed] (B.south) -- ++(0,-2.75)
  -- node[lbl, above] {retained, not deleted} ($(H.south)+(0,-2.95)$)
  -- (H.south);
\end{tikzpicture}}
\caption{Figure~\ref{fig:loop} with the arrows named. $\mathfrak D$ proposes at
the joins, \deleted{\textsc{Adm} is the only door into evidence}\changegap
\deleted{\textsc{Adm} is the door into computational evidence and
$\mathrm{Aud}$ is the door from a candidate insight to a cryptanalytic claim}
\changegap\added{\textsc{Adm} records a reproducible candidate and
$\mathrm{Aud}$ precedes cryptanalytic classification},
and $\mathfrak P$
returns every stated limitation to $\mathcal H$. The $\bot$ branch is archived
and not discarded, so a rejected hypothesis keeps its discriminating
evidence and can be reopened. The third verdict is a sink: an unresolved path
is neither evidence nor a refutation, and nothing may be concluded from
reaching it.}
\label{fig:formalloop}
\end{figure}

\begin{figure}[!ht]
\centering
\resizebox{\linewidth}{!}{%
\begin{tikzpicture}[
  font=\footnotesize,
  lv/.style={draw, rounded corners=2pt, align=center, inner sep=4pt,
             minimum height=8mm, text width=1.75cm},
  every node/.append style={inner ysep=3pt},
  top/.style={lv, line width=1.1pt},
  tag/.style={font=\scriptsize, inner sep=1.5pt},
  fib/.style={align=center, font=\scriptsize},
  >={Stealth[round,length=4pt]},
  every path/.style={thick}
]
\node[lv] (l1) {proof or spec flaw};
\node[lv, right=6mm of l1] (l2) {exact counter-example};
\node[lv, right=6mm of l2] (l3) {assumption break};
\node[lv, right=6mm of l3] (l4) {distinguisher};
\node[lv, right=6mm of l4] (l5) {key recovery};
\node[lv, right=6mm of l5] (l6) {plaintext recovery};
\node[top, right=6mm of l6] (l7) {security game broken};
\foreach \x/\y in {l1/l2,l2/l3,l3/l4,l4/l5,l5/l6,l6/l7}
  {\draw[->] (\x) -- node[tag, above=0.6mm] {$\prec$} (\y);}

\node[tag, above=2.2mm of l1] {$\lambda_1$};
\node[tag, above=2.2mm of l2] {$\lambda_2$};
\node[tag, above=2.2mm of l3] {$\lambda_3$};
\node[tag, above=2.2mm of l4] {$\lambda_4$};
\node[tag, above=2.2mm of l5] {$\lambda_5$};
\node[tag, above=2.2mm of l6] {$\lambda_6$};
\node[tag, above=2.2mm of l7] {$\lambda_7$};

\node[fib, below=4.5mm of l1, text width=2.0cm]
  {$\mu^{-1}(\lambda_1)$\\Hint-MLWE\\Hollow params\\MinRank};
\node[fib, below=4.5mm of l3, text width=2.0cm]
  {$\mu^{-1}(\lambda_3)$\\group-ring\\quotients};
\node[fib, below=4.5mm of l5, text width=2.0cm]
  {$\mu^{-1}(\lambda_5)$\\function-field\\basis};
\node[fib, below=4.5mm of l7, text width=2.2cm]
  {$\mu^{-1}(\lambda_7)$\\Ring-LWR\\middle-product\\e-voting\\Hollow-LWE\\\added{Spinel}\\CRT-RLWE};
\end{tikzpicture}}
\caption{Figure~\ref{fig:ladder} as a chain
$\lambda_1\prec\cdots\prec\lambda_7$ in the escalation poset $\Lambda$, with
the attainment map
\deleted{$\mu:\Sigma_\infty\to\Lambda$}\changegap
\added{$\mu:\Sigma_\infty^{\mathrm{Aud}}\to\Lambda$, where
$\Sigma_\infty^{\mathrm{Aud}}=\{\eta\in\Sigma_\infty:\mathrm{Aud}(\eta)\}$,}
sending each audited survivor to the
highest level its evidence reaches. The fibres $\mu^{-1}(\lambda_i)$ are the
contents of each level. A finding halts at $\mu(\eta)=\lambda_i$ exactly when
$\neg\,\mathrm{Brg}$ holds for every $P$ at $\lambda_{i+1}$, or when
Lemma~\ref{l:twoexits} has taken its second exit and the halt is itself a
theorem. \deleted{The domain of $\mu$ contains only findings that have passed
the human audit predicate.}}
\label{fig:formalladder}
\end{figure}

\end{document}